\documentclass{article}

\usepackage{amsfonts,amssymb,amsmath,a4wide,paralist}
\usepackage{epsfig,rotating,framed,enumerate,listings}


\newcommand{\OPEN}{{\it open}}

\newcommand{\CUT}{{\it cut}}
\newcommand{\PL}{{\it plt}}
\newcommand{\PLS}{{\it plts}}
\newcommand{\FI}{{\it first}}
\newcommand{\LA}{{\it last}}
\newcommand{\links}{{\it left}}
\newcommand{\rechts}{{\it right}}

\newcommand{\FRONT}{{\it front}}
\newcommand{\NN}{{\mathbb N}}
\newcommand{\ZZ}{{\mathbb Z}}
\newcommand{\bigo}{\ensuremath{\mathcal{O}}}
\newtheorem{theorem}{Theorem}[section]
\newtheorem{example}[theorem]{Example}
\newtheorem{lemma}[theorem]{Lemma}
\newtheorem{proposition}[theorem]{Proposition}
\newtheorem{corollary}[theorem]{Corollary}
\newcommand{\dpw} {\text{d-pw}}
\newenvironment{proof}{\noindent{\bf Proof~}}{\null\hfill $\Box$\par\medskip}

\newtheorem{remark}[theorem]{Remark}
\newenvironment{desctight}
  {\begin{list}{}{
\setlength\labelwidth{-5pt}
        \setlength{\itemsep}{0.5pt}
        \setlength{\parsep}{0pt}
        \setlength\itemindent{-\leftmargin}
        
}}
    {\end{list}}

\pagestyle{plain}

\begin{document}

\title{Complexity of the FIFO Stack-Up Problem\thanks{A short version of 
this paper appeared in the Proceedings of the {\em International Conference on 
Operations Research} (OR 2013) \cite{GRW14a}.}}

\author{Frank Gurski\thanks{University of  D\"usseldorf,
Institute of Computer Science, Algorithmics for Hard Problems Group, 40225 D\"usseldorf, Germany,
{\tt frank.gurski@hhu.de}}
\and 
Jochen Rethmann\thanks{Niederrhein University of Applied Sciences,
Faculty of Electrical Engineering and Computer Science, 47805 Krefeld,
Germany, 
{\tt jochen.rethmann@hs-niederrhein.de}}
\and
Egon Wanke\thanks{University of  D\"usseldorf,
Institute of Computer Science,  Algorithms and Data Structures Group, 40225 D\"usseldorf, Germany,
{\tt e.wanke@hhu.de}
}}

\maketitle

\begin{abstract}
We study the combinatorial FIFO stack-up problem. In delivery industry,
bins have to be stacked-up from conveyor belts onto pallets with respect to
customer orders. Given $k$
sequences $q_1, \ldots, q_k$ of labeled bins and a positive integer $p$,
the aim is to stack-up the bins by iteratively removing the first bin
of one of the $k$ sequences and put it onto an initially empty pallet of
unbounded capacity
located at one of $p$ stack-up places. Bins with different pallet labels 
have to be placed on different pallets, bins with the same pallet label 
have to be placed on the same pallet. After all bins for a pallet have 
been removed from the given sequences, the corresponding stack-up place 
will be cleared and becomes available for a further pallet. The 
FIFO stack-up problem is to find a stack-up sequence such that all pallets 
can be build-up with the available $p$ stack-up places. 

In this paper, 
we introduce two digraph models for the FIFO stack-up problem, namely
the processing graph and the sequence graph. 
We show that there is a processing of some list of sequences with at most 
$p$ stack-up places if and only if the sequence graph of this list has 
directed pathwidth at most  $p-1$. This connection implies that  
the FIFO stack-up problem is NP-complete
in general, even if there are at most 6 bins for every pallet and
that the problem can be solved in polynomial time, if the number
$p$ of stack-up places is assumed to be fixed. 
Further the processing graph allows us to show that
the problem can be solved in polynomial time, if the number $k$ of 
sequences is assumed to be fixed.

\bigskip
\noindent
{\bf Keywords:} 
computational complexity;  combinatorial optimization;
directed pathwidh; discrete algorithms
\end{abstract}

\section{Introduction}

We consider the combinatorial problem of stacking up bins from
conveyor belts onto pallets. This problem originally appears in
{\em stack-up systems} that play an important role in delivery industry
and warehouses. Stack-up systems are often the back end of
{\em order-picking systems}. A detailed description of the applied background of
such systems is given in \cite{Kos94,RW97}. Logistic experiences over 30
years lead to high flexible conveyor-based stack-up systems in delivery
industry \cite{Yam09}. We do not intend to modify the architecture of
existing systems, but try to develop efficient algorithms to control them. 

The bins that have to be stacked-up onto pallets reach the
stack-up system on a main conveyor belt. At the end of the line
they enter the palletizing system. Here
the bins are picked-up by stacker cranes or robotic arms and moved
onto pallets, which are located at {\em stack-up places}. 
Often vacuum grippers are used to pick-up the bins. This picking process can
be performed in different ways depending on the architecture
of the palletizing system (single-line or multi-line palletizers).
 Full pallets are carried away by automated guided vehicles, or
by another conveyor system, while new empty pallets are placed
at free stack-up places. 

The developers and producers of robotic palletizers distinguish between single-line
and multi-line palletizing systems. Each of these systems has its advantages and disadvantages.

In {\em single-line palletizing systems} there is only one conveyor belt from
which the bins are picked-up. Several robotic arms or stacker cranes are
placed around the end of the conveyor. We model such systems by a random
access storage which is automatically replenished with bins from the main conveyor, 
see Figure \ref{F10a}.
The area from which the bins can be picked-up is called the storage area. It
is determined by the coverage of stacker cranes or robotic arms.

\begin{figure}[hbtp]
\centerline{\epsfxsize=90mm \epsfbox{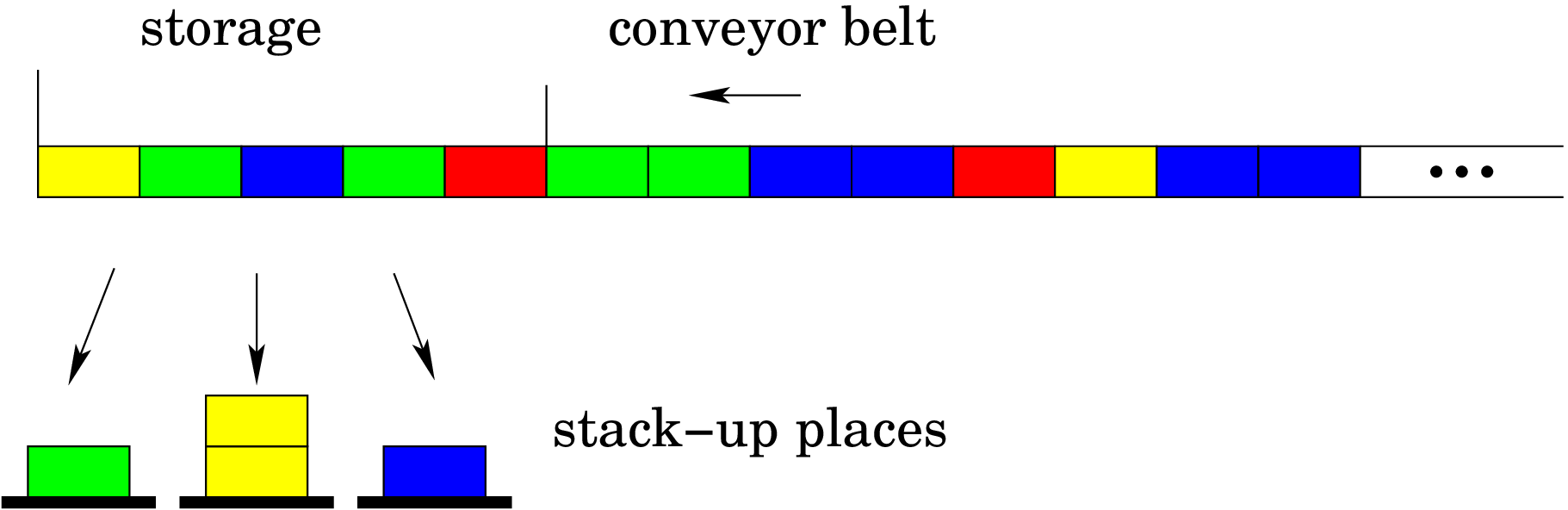}}
\caption{The single-line stack-up system using a random access
storage of size 5. The colors represent the pallet labels. 
Bins with different colors have to be placed on different pallets, bins 
with the same color have to be placed on the same pallet.}
\label{F10a}
\end{figure}

In {\em multi-line palletizing systems} there are several buffer conveyors from which
the bins are picked-up. The robotic arms or stacker cranes are placed at
the end of these conveyors. Here, the bins from the main conveyor of the order-picking
system first have to be distributed to the multiple infeed lines to enable
parallel processing. Such a distribution can be done by some cyclic storage conveyor,
see Figure \ref{F00}. From the cyclic storage conveyor the bins are pushed out to the buffer conveyors.
A stack-up system using a cyclic storage conveyor is, for example, located at
Bertelsmann Distribution GmbH in G\"utersloh, Germany. On certain days,
several thousands of bins are stacked-up using a cyclic storage conveyor with
a capacity of approximately 60 bins and 24 stack-up places, while up to
32 bins are destined for a pallet.
This palletizing system has originally initiated our research.

\begin{figure}[hbtp]
\centerline{\epsfxsize=65mm \epsfbox{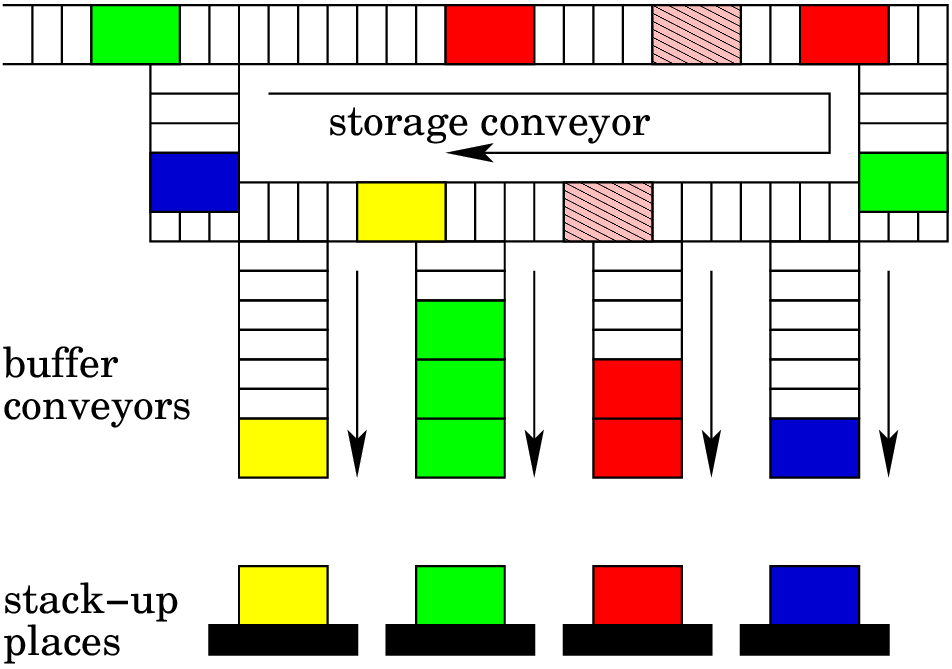}}
\caption{A multi-line stack-up system with a pre-placed cyclic storage conveyor.}
\label{F00}
\end{figure}

If we ignore the task to distribute the bins from the main
conveyor to the $k$ buffer conveyors, i.e., if the filled buffer conveyors are already given,
and if each arm can only pick-up the first bin of one of the buffer conveyors,
then the system is called a FIFO palletizing system.
Such systems can be modeled by several simple queues, see Figure \ref{F01}.

\begin{figure}[hbtp]
\centerline{\epsfxsize=65mm \epsfbox{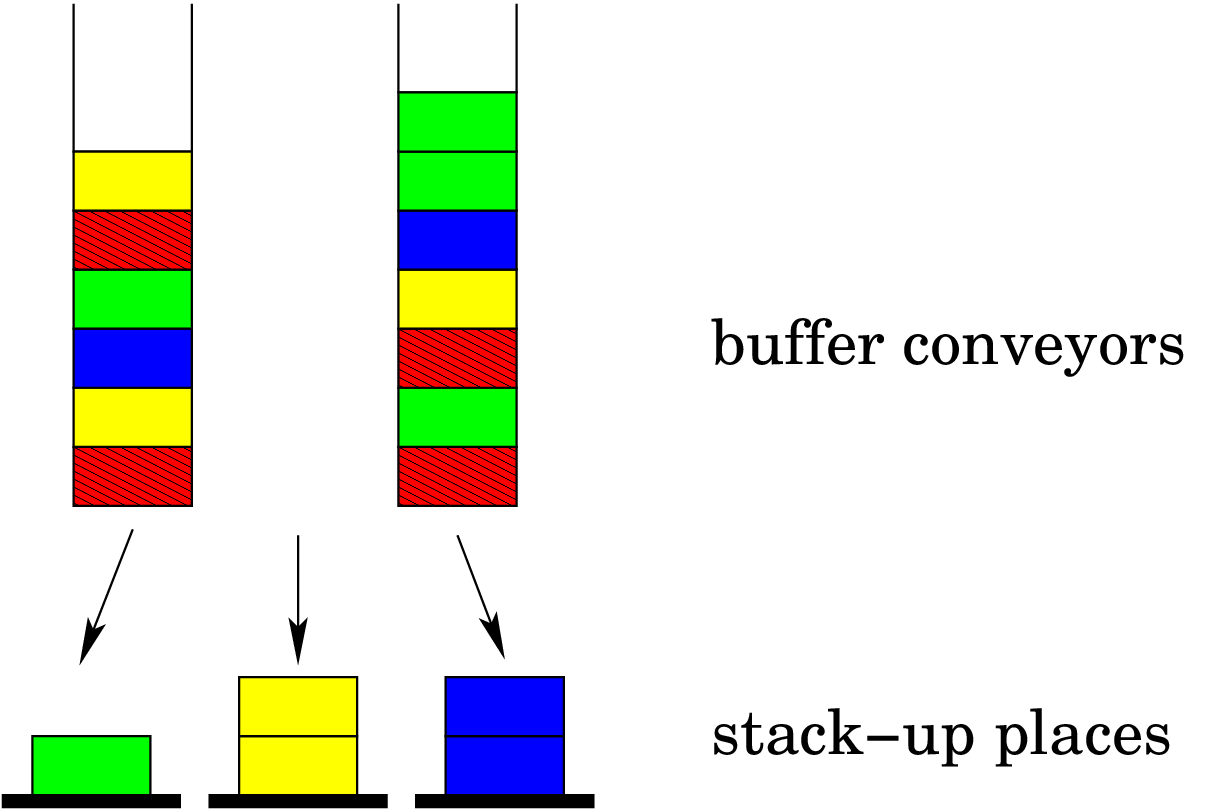}}
\caption{The FIFO stack-up system analyzed in this paper. 
The system is blocked (cf.~page~\pageref{page-blocking} for the definition), 
because the pallet for the red bins cannot be opened and the pallets for the 
green, yellow, and blue bins cannot be closed.}
\label{F01}
\end{figure}

From a theoretical point of view, an instance of the FIFO stack-up problem
consists of $k$ sequences $q_1, \ldots, q_k$ of bins and a number of available
stack-up places $p$. Each bin of each sequence $q_i$ is destined for exactly one pallet.
The FIFO stack-up problem is to decide whether one can remove iteratively the bins
from the sequences such that in each step only one of the first bins of $q_1, \ldots, q_k$
is removed and after each step at most $p$ pallets are open.
A pallet $t$ is called {\em open}, if at least
one bin for pallet $t$ has already been removed from the sequences,
and if at least one bin for pallet $t$ is still contained
in the remaining sequences. If a bin $b$ is removed from a sequence
then all bins located behind $b$ are moved-up one position to the front
(cf.~Section \ref{sec-prel} for the formal definition).

Every processing should absolutely avoid blocking situations.\label{page-blocking}
A system is {\em blocked}, if all stack-up places are occupied by
pallets, and non of the bins that may be used in the next step are destined for an open
pallet. To unblock the system,
bins have to be picked-up manually and moved to pallets by human workers.
Such a blocking situation is sketched in Figure \ref{F01}.

The single-line stack-up problem can be defined in the same way. An
instance of the single-line stack-up problem consists of one sequence
$q$ of bins, a storage capacity $s$, and a number of available stack-up places $p$.
In each step one of the first $s$ bins of $q$ can be removed. Everything else is defined analogously.

Many facts are known about single-line stack-up systems \cite{RW97,RW00,RW01}.
In \cite{RW97} it is shown that the single-line stack-up decision problem
is NP-complete, but can be solved efficiently if the storage capacity $s$
or the number of available stack-up places $p$ is fixed. The problem remains NP-complete as shown in
\cite{RW00}, even if the sequence contains at most 9 bins per pallet.
In \cite{RW00}, a polynomial-time {\em off-line} approximation algorithm
for minimizing the storage capacity $s$ is introduced.
This algorithm yields a solution that is optimal up to a factor bounded by $\log(p)$.
In \cite{RW01} the performances of simple {\em on-line} stack-up
algorithms are compared with optimal off-line solutions by a competitive
analysis \cite{Bor98,FW98}. 


\medskip
The FIFO stack-up problem has algorithmically not been studied by
other authors up to now, although stack-up systems play an important
role in delivery industry and warehouses \cite{Yam09}.
In our studies, we neither limit the number of bins for a pallet nor restrict
the number of stack-up places to the number of buffer conveyors.
That is, the number of stack-up
places can be larger than or less than the number of buffer conveyors.

In Section 3, we introduce two digraph models that help us to find algorithmic
solutions for the FIFO stack-up problem. The first digraph is the {\em processing graph}.
It has a vertex for every possible configuration of the system and an arc from configuration $A$
to configuration $B$ if configuration $B$ can be obtained from configuration $A$
by a single processing step. If the number $k$ of sequences is assumed to be fixed,
a specific search strategy on the processing graph
allows us to find an optimal solution for the FIFO stack-up problem
in polynomial time.
For this case, we additionally give a non-deterministic
algorithm that uses only logarithmic work-space. 

The second digraph is called the {\em sequence graph}.
It has a vertex for every pallet and an arc from pallet $a$ to pallet $b$,
if and only if in any processing pallet $b$ can only be closed if pallet $a$
has already been opened.
We show in Section 4 how the sequence graph can be constructed and
that there is a processing of at most 
$p$ stack-up places if and only if the sequence graph has 
directed pathwidth at most  $p-1$ (cf. Section  \ref{seq-gr}
for the formal definition of directed pathwidth). This implies that 
the FIFO stack-up problem can be solved
in polynomial time, if the number $p$ of given stack-up places is 
assumed to be fixed. In Section 5, this relationship is used to show
that the FIFO stack-up problem is NP-complete in general, even
if all sequences together contain at most 6 bins destined for the same pallet.

\section{Preliminaries}\label{sec-prel}

Unless otherwise stated, $k$ and $p$ are some positive integers throughout
the paper. We consider {\em sequences} $q_1=(b_1, \ldots, b_{n_1}),
\ldots, q_k=(b_{n_{k-1}+1}, \ldots, b_{n_k})$ of pairwise distinct {\em
bins}. These sequences represent the buffer queues (handled by the
buffer conveyors) in real stack-up systems.
Each bin $b$ is labeled with a {\em pallet symbol} $\PL(b)$ which
is, without loss of generality, some positive integer. We say bin $b$ is
destined for pallet $\PL(b)$.
The labels of the pallets can be chosen arbitrarily, because we only need
to know whether two bins are destined for the same pallet or for different
pallets. In order to avoid confusion between indices (or positions) and pallet symbols,
we use in our examples characters for pallet symbols. The set
of all pallets of the bins in some sequence $q_i$ is denoted by
\[
\PLS(q_i)= \{ \PL(b) ~|~ b \in q_i \}.
\]
For a list of sequences $Q = (q_1, \ldots, q_k)$ we denote
$$\PLS(Q) = \PLS(q_1) \cup \cdots \cup \PLS(q_k).$$

For some sequence $q = (b_1, \ldots, b_n)$, we say bin $b_i$ is {\it on the
left of} bin $b_j$ in sequence $q$ if $i < j$.
A sequence $q' = (b_j, b_{j+1}, \ldots, b_n)$, $j \geq 1$, is called a
{\em subsequence} of sequence $q = (b_1, \ldots, b_n)$. We define
$q - q' = (b_1, \ldots, b_{j-1})$.

Let $Q = (q_1, \ldots, q_k)$ and $Q' = (q'_1, \ldots, q'_k)$ be two lists
of sequences of bins such that each sequence $q'_j$, $1 \leq j \leq k$,
is a subsequence of sequence $q_j$. Each such pair
$(Q,Q')$ is called a {\em configuration}. In every configuration
$(Q,Q')$ the first entry $Q$ is the initial list of sequences of bins and the
second entry $Q'$ is the list of sequences that remain to be processed. 
A pallet $t$ is called {\em open} in configuration $(Q,Q')$, if a bin of 
pallet $t$ is contained
in some $q'_i \in Q'$ and if another bin of pallet $t$ is contained in some
$q_j - q'_j$ for $q_j \in Q$, $q'_j \in Q'$. The {\em set of open pallets}
in configuration $(Q,Q')$ is denoted by $\OPEN(Q,Q')$. A pallet $t \in
\PLS(Q)$ is called {\em closed} in configuration $(Q,Q')$, if $t \not\in
\PLS(Q')$, i.e.\ no sequence of $Q'$ contains a bin for pallet $t$.
Initially all pallets are {\em unprocessed}. After the first bin of a pallet
$t$ has been removed from one of the sequences, pallet $t$ is 
either open or closed.

In view of the practical background, we only consider lists of sequences
that together contain at least two bins for each pallet. If there is only
one bin for some pallet, we can handle this bin without waiting for any
further bin. Thus, we can process instances with such special pallets
with at most one more stack-up place. Our definition of open pallets
is useless for pallets with only one bin.
Throughout the paper $Q$ always denotes a list of some sequences of bins.

\subsection*{The FIFO Stack-up Problem}

Consider a configuration $(Q,Q')$. 
The removal of the first bin from one
subsequence $q' \in Q'$ is called {\em transformation step}. A sequence
of transformation steps that transforms the list $Q$ of $k$ sequences
into $k$ empty subsequences is called a {\em processing} of $Q$.



\begin{desctight} 
\item[Name]  {\sc FIFO stack-up}


\item[Instance] A list $Q = (q_1, \ldots, q_k)$ of sequences of bins, 
for every bin $b$ of $Q$ its pallet symbol $\PL(b)$, and a positive
integer $p$.

\item[Question] Is there a processing of $Q$, such that in each configuration $(Q,Q')$ during the
processing at most $p$ pallets are open?
\end{desctight}

We use the following variables:
$k$ denotes the number of sequences, and
$p$ stands for the number of stack-up places. Furthermore,
$m$ represents the number of pallets in $\PLS(Q)$, and
$n$ denotes the total number of bins in all sequences, i.e. $n=n_k$. Finally,
$N =\max\{|q_1|, \ldots, |q_k|\}$ is the maximum sequence length.
In view of the practical background, it holds $p < m$, $k < m$, $m < n$,
and $N < n$.

It is often convenient to use pallet identifications instead of
bin identifications to represent a sequence $q$. For $r$ not
necessarily distinct pallets $t_1, \ldots, t_r$ let $[t_1,\ldots,t_r]$
denote some sequence of $r$ pairwise distinct bins $(b_1, \ldots, b_r)$,
such that $\PL(b_i) = t_i$ for $i=1,\ldots,r$.
We use this notion for lists of sequences as well. For the
sequences $q_1 = [t_1, \ldots, t_{n_1}], \ldots, q_k = [t_{n_{k-1}+1},
\ldots, t_{n_k}]$ of pallets we define $q_1 =
(b_1, \ldots, b_{n_1}), \ldots, q_k = (b_{n_{k-1}+1}, \ldots, b_{n_k})$ to
be sequences of bins such that $\PL(b_i) = t_i$ for $i = 1, \ldots, n_k$,
and all bins are pairwise distinct.

For some list of subsequences $Q'$ we define
$\FRONT(Q')$ to be the pallets of the first bins of the 
queues of $Q'$  (cf.  Example \ref{EX1} and Table \ref{TB1}).

\begin{example} \label{EX1}
Consider list $Q = (q_1, q_2)$ of sequences
$$q_1 = (b_1,\ldots,b_4) = [a,a,b,b]$$ and
$$q_2 = (b_5,\ldots,b_{12}) = [c,d,e,c,a,d,b,e].$$
Table \ref{TB1} shows a processing of $Q$ with
3 stack-up places. The underlined bin is always the bin that will be
removed in the next transformation step. We denote $Q_i = (q^i_1,q^i_2)$,
thus each row represents a configuration $(Q, Q_i)$.
\end{example}

\begin{table}[hbtp]
\[\begin{array}{|r|lll|l|c|l|}
\hline
i & q^i_1 & ~ & q^i_2 & \FRONT(Q_i)& \mbox{remove}  & \OPEN(Q,Q_i) \\
\hline
 0 & [a,a,b,b] & ~ & [\underline{c},d,e,c,a,d,b,e] &\{a,c\} & b_5 & \emptyset \\
\hline
 1 & [a,a,b,b] & ~ & [\underline{d},e,c,a,d,b,e] & \{a,d\} & b_6 & \{c\} \\
\hline
 2 & [a,a,b,b] & ~ & [\underline{e},c,a,d,b,e] & \{a,e\} &b_7 & \{c,d\} \\
\hline
 3 & [a,a,b,b] & ~ & [\underline{c},a,d,b,e] &\{a,c\} &b_8 & \{c,d,e\} \\
\hline
 4 & [\underline{a},a,b,b] & ~ & [a,d,b,e] & \{a\}&b_1 & \{d,e\} \\
\hline
 5 & [\underline{a},b,b] & ~ & [a,d,b,e] & \{a\} &b_2 & \{a,d,e\} \\
\hline
 6 & [b,b] & ~ & [\underline{a},d,b,e] & \{a,b\}&b_9 & \{a,d,e\} \\
\hline
 7 & [b,b] & ~ & [\underline{d},b,e] & \{b,d\}&b_{10} & \{d,e\} \\
\hline
 8 & [b,b] & ~ & [\underline{b},e] & \{b\}&b_{11} & \{e\} \\
\hline
 9 & [\underline{b},b] & ~ & [e] &\{b,e\} &b_3 & \{b,e\} \\
\hline
10 & [\underline{b}] & ~ & [e] & \{b,e\}&b_4 & \{b,e\} \\
\hline
11 & [] & ~ & [\underline{e}] & \{e\}&b_{12} & \{e\} \\
\hline
12 & [] & ~ & [] & \emptyset&- & \emptyset \\
\hline
\end{array}\]
\caption{A processing of $Q = (q_1, q_2)$ from Example \ref{EX1}
with 3 stack-up places. There is no processing of $Q$ that needs
less than 3 stack-up places.}
\label{TB1} 
\end{table}

Consider a processing of a list $Q$ of sequences. Let $B = (b_{\pi(1)},
\ldots, b_{\pi(n)})$ be the order in which the bins are removed during
the processing of $Q$, and let $T = (t_1, \ldots, t_m)$ be the order in
which the pallets are opened during the processing of $Q$. Then $B$ is
called a {\em bin solution} of $Q$, and $T$ is called a {\em pallet solution}
of $Q$. The transformation in Table \ref{TB1} defines the bin solution
\[
B=(b_5, b_6, b_7, b_8, b_1, b_2, b_9, b_{10}, b_{11}, b_3, b_4, b_{12}),
\]
and the pallet solution $T = (c,d,e,a,b)$.

During a processing of a list $Q$ of sequences there are often
configurations $(Q,Q')$ for which it is easy to find a bin $b$ that
can be removed from $Q'$ such that a further processing with $p$
stack-up places is still possible. This is the case, if bin $b$ is
destined for an already open pallet, see configuration $(Q,Q_3)$,
$(Q,Q_5)$, $(Q,Q_6)$, $(Q,Q_7)$, $(Q,Q_9)$, $(Q,Q_{10})$, or $(Q,Q_{11})$
in Table \ref{TB1}.
In the following we show:
\begin{itemize}
\item If one of the first bins of the sequences is destined for an already
  open pallet then this bin can be removed without increasing the number
  of stack-up places necessary to further process the sequences.
\item If there is more than one bin at choice for already open pallets
  then the order, in which those bins are removed is arbitrary.
\end{itemize}

Both rules are useful, i.e. within the following two simple algorithms
for the FIFO stack-up problem. 
\begin{description}
\item[\bf Algorithm 1] Generate all possible bin orders $(b_{\pi(1)},\ldots,b_{\pi(n)})$
and verify, whether this can be a processing. This leads to a simple but
very inefficient algorithm with running time $\bigo(n^2\cdot n!)$,
where $\bigo(n^2)$ time is needed for each test.

\item[\bf Algorithm 2] Generate all possible pallet orders $(t_{\pi(1)},\ldots,t_{\pi(m)})$
and test, whether this can be a processing. This leads to a further simple but
less inefficient algorithm with running time $\bigo(n^2\cdot m!)$.
\end{description}

The running time of Algorithm 2 is much better than those
of Algorithm 1 and Algorithm 2 is only possible because of the second rule.
If bins for already open pallets must be removed in a special order, 
Algorithm 2 would not be possible.

To show the rules, consider a processing of some list $Q$ of sequences
with $p$ stack-up places. Let 
\[
   (b_{\pi(1)}, \ldots, b_{\pi(i-1)}, b_{\pi(i)}, \ldots, b_{\pi(\ell-1)},
     b_{\pi(\ell)}, b_{\pi(\ell+1)}, \ldots, b_{\pi(n)})
\]
be the order in which the bins are removed from the sequences during the
processing, and let $(Q,Q_j)$, $1 \leq j \leq n$ denote the configuration
such that bin $b_{\pi(j)}$ is removed in the next transformation step.
Suppose bin $b_{\pi(i)}$ will be removed in some transformation step
although bin $b_{\pi(\ell)}$, $\ell > i$, for some already open pallet
$\PL(b_{\pi(\ell)}) \in \OPEN(Q,Q_i)$ could be removed next. We define
a modified processing
\[
   (b_{\pi(1)}, \ldots, b_{\pi(i-1)}, b_{\pi(\ell)}, b_{\pi(i)}, \ldots,
    b_{\pi(\ell-1)}, b_{\pi(\ell+1)}, \ldots, b_{\pi(n)})
\]
by first removing bin $b_{\pi(\ell)}$, and afterwards the bins
$b_{\pi(i)}, \ldots, b_{\pi(\ell-1)}$ in the given order.
Obviously, in each configuration during the modified processing
there are at most $p$ pallets open. To remove first some bin of an already
open pallet is a kind of priority rule.

A configuration $(Q, Q')$ is called a {\em decision configuration}, if the
first bin of each sequence $q' \in Q'$ is destined for a non-open pallet,
see configurations $(Q,Q_0)$, $(Q,Q_1)$, $(Q,Q_2)$, $(Q,Q_4)$, and
$(Q,Q_8)$ in Table \ref{TB1}, i.e. $\FRONT(Q')\cap \OPEN(Q,Q')=\emptyset$. 
We can restrict FIFO stack-up algorithms to deal with such decision
configurations, in all other configurations the algorithms automatically
remove a bin for some already open pallet.

If we have a pallet solution computed by some FIFO stack-up algorithm, we
need to convert the pallet solution into a sequence of transformation
steps, i.e.\ a processing of $Q$. This is done by algorithm {\sc Transform}
in Figure \ref{fig:algorithm1}.
Given a list of sequences $Q = (q_1, \ldots, q_k)$ and
a pallet solution $T = (t_1, \ldots, t_m)$ algorithm {\sc Transform} gives
us in time $\bigo(n\cdot k)\subseteq \bigo(n^2)$ 
a bin solution of $Q$, i.e. a processing of $Q$.

\begin{figure}[h]
\hrule
{\strut\footnotesize \bf Algorithm {\sc Transform}} 
\hrule
\medskip
\medskip
$q'_1 := q_1, \ldots, q'_k := q_k$ \\
$j := 1$ \\
$T' := \{t_1\}$ \\[2pt]
repeat the following steps until $q'_1 = \emptyset, \ldots, q'_k =
\emptyset$:
\begin{enumerate}
\item if there is a sequence $q'_i$ such that the first bin $b$ of
  $q'_i$ is destined for a pallet in $T'$, i.e.\ $\PL(b) \in T'$,
  then remove bin $b$ from sequence $q'_i$, and output $b$

  {\it Comment:} Bins for already open pallets are removed automatically.
\item otherwise set $j := j+1$ and $T' := T' \cup \{t_j\}$

  {\it Comment:} If the first bin of each subsequence $q'_i$ is destined
  for a non-open pallet, the next pallet of the pallet solution has to be
  opened.
\end{enumerate}

\normalsize
\hrule
\caption{Algorithm for transforming a pallet solution into a bin solution.}
\label{fig:algorithm1}
\end{figure}

Obviously, there is no other processing of $Q$ that also defines pallet
solution $T$ but takes less stack-up places.

\section{Digraph Models} \label{SC-dm}

Next we introduce two digraph models 
for the FIFO stack-up problem.
To solve an instance of the FIFO stack-up problem we transform it into a digraph 
and by solving a graph problem we obtain a solution for the FIFO stack-up problem.

\subsection{The processing graph} \label{SC2}

In this section we give an algorithm which computes a processing
of the given sequences of bins with a minimum number of stack-up places.
Such an optimal processing can always be found by computing the {\it
processing graph} and doing some calculation on it. Before we define the
processing graph let us consider some general graph problem, that will be
useful to find a processing
of the given sequences of bins with a minimum number of stack-up places and 
thus to solve the FIFO stack-up problem.

Let $G=(V,E,f)$ be a directed acyclic vertex-labeled graph. Function
$f: V \to \ZZ$ assigns to every vertex $v \in V$  a value $f(v)$. Let
$s \in V$ and $t \in V$ be two vertices, where $s$ stands for source, and
$t$ stands for target. For some vertex $v \in V$ and some
path $P=(v_1, \ldots, v_\ell)$ with $v_1 = s$, $v_\ell = v$ and
$(v_i, v_{i+1}) \in E$ we define
\[
   val(P) := \max_{u \in P} (f(u))
\]
to be the maximum value of $f$ on that path. Let ${\mathcal P}_s(v)$ 
denote the set of all paths from vertex $s$ to vertex $v$. Then we define
\[
   val(v) := \min_{P \in {\mathcal P}_s(v)} (val(P)).
\]
If $v$ is not reachable from $s$, we define $val(v)=\infty$.
The problem is to compute the value $val(t)$. 
A solution of this problem can be found by dynamic programming and solves
also the FIFO stack-up problem. The reason for this is given shortly.
For some vertex $v \in V$ let
\[
   N^{-}(v) := \{ u \in V \mid (u,v) \in E\}
\]
be the set of predecessors of $v$ in digraph $G$. The acronym $N^{-}$ stands
for in-neighborhood. Then it holds
\[
   val(v) = \max\{ f(v), \min_{u \in N^{-}(v)}(val(u)) \},
\]
because each path $P \in {\mathcal P}_s(v)$ must go through some vertex $u\in N^{-}(v)$. 
That means, the value of $val(v)$ can be processed recursively. But, if
we would do that, subproblems often would be solved several times. So,
we use dynamic programming to solve each subproblem only once, and put
these solutions together to a solution of the original problem. This is
possible, since the graph is directed and acyclic.

Let $topol: V \to \NN$ be an ordering of the vertices of digraph $G$ such that
$topol(u) < topol(v)$ holds for each $(u,v) \in E$. That means, $topol$ is
a topological ordering of the vertices.

\begin{figure}[ht]
\hrule
{\strut\footnotesize \bf Algorithm {\sc Opt}} 
\hrule
\smallskip 
\begin{tabbing}
xxxx \= xxxx \= xxxx \= xxxx \=xxxx \= xxxx \=xxxx \= xxxx \=\kill
$val[s] := f(s)$ \\
for every vertex $v \neq s$ in order of $topol$ do \\
\> $val[v] := \infty$ \\
\> for every $u \in N^{-}(v)$ do \>\>\>\>\>\>  $\vartriangleright$ Compute $\min_{u \in N^{-}(v)}(val(u))=:val(v)$ \\
\>\> if ($val[u] < val[v]$) \\
\>\>\> $val[v] := val[u]$ \\
\>\>\> $pred[v] := u$ \\
\> if ($val[v] < f(v)$) \>\>\>\>\>\>  $\vartriangleright$ Compute  $\max\{ f(v), val(v)\}$ \\
\>\> $val[v] := f(v)$
\end{tabbing}

\hrule
\caption{Finding an optimal processing by dynamic programming, where $topol$
is a topological ordering of the vertices.}
\label{fig:algorithm2}
\end{figure}

The value $val(t)$ can be computed in polynomial time by Algorithm {\sc Opt}
given in Figure \ref{fig:algorithm2}. A topological ordering of the vertices
of digraph $G$ can be found by a depth first search algorithm in time
$\bigo(|V| + |E|)$. The remaining work of algorithm {\sc Opt} can also
be done in time $\bigo(|V| + |E|)$. Thus, it works in time
$\bigo(|V| + |E|)$, i.e. linear in 
the input 
size.

\begin{remark}
An alternative 
algorithm to compute the values $val(v)$ for $v\in V$ in $G=(V,E,f)$
was suggested by an anonymous reviewer. 
For every $v\in V$ the value  $val(v)$ can be computed as follows.
Start with $r=\max_{u\in V}f(u)$. While the graph contains at least
one path from $s$ to $v$, we remove all vertices $u$ with $f(u)\geq r$
and decrease $r$ by one. Finally return $val(v)=r+1$.
The running time is not better but it shows immediately that it can be 
done in polynomial time. Independently we introduced a  
breadth first search solution for the FIFO stack-up problem in \cite{GRW15a}.
\end{remark}

We need some additional terms, before we show how algorithm {\sc Opt} 
can solve the FIFO stack-up problem. For a sequence $q = (b_1, \ldots, b_n)$ 
let
\[
   \links(q, i) := (b_1, \ldots, b_i)
\]
denote the sequence of the first $i$ bins of sequence $q$, and let
\[
   \rechts(q, i) := (b_{i+1}, \ldots, b_n)
\]
denote the remaining bins of sequence $q$ after removing the first $i$ bins.
Consider again Example \ref{EX1} and Table \ref{TB1}. It can be seen that
a configuration is well-defined by the number of bins that are removed from
each sequence. Configuration $(Q,Q_6)$ can be described equivalently by the
tuple $(2,4)$, since in configuration $(Q,Q_6)$ two bins of sequence $q_1$
have been removed, and four bins of sequence $q_2$ have been removed.

The position of the first
bin in some sequence $q_i$ destined for some pallet $t$ is denoted by
$\FI(q_i,t)$, similarly the position of the last bin for pallet $t$ in
sequence $q_i$ is denoted by $\LA(q_i,t)$. For technical reasons, if 
there is no bin for pallet $t$
contained in sequence $q_i$, then we define $\FI(q_i,t) = |q_i|+1$, and
$\LA(q_i,t) = 0$.\label{label-page-fi-and-la}

\begin{example} \label{EX5}
Consider list $Q = (q_1, q_2)$ of the sequences $$q_1 = [a,b,c,a,b,c]$$
and $$q_2 = [d,e,f,d,e,f,a,b,c].$$ Then we get $\links(q_1,2) = [a,b]$,
$\rechts(q_1,2) = [c,a,b,c]$, $\links(q_2,3) = [d,e,f]$, and
$\rechts(q_2,3) = [d,e,f,a,b,c]$.

If we denote $q'_1 := \rechts(q_1,2)$, and $q'_2 := \rechts(q_2,3)$,
then there are 5 pallets open in configuration
$(Q,Q')$ with $Q' = (q'_1, q'_2)$, namely the pallets
$a$, $b$, $d$, $e$, and $f$. 
\end{example}

We generalize this for a list $Q = (q_1, \ldots, q_k)$ of sequences 
and we define the cut
\begin{eqnarray*}
  \lefteqn{\CUT_Q(i_1, \ldots, i_k) ~ := ~ \{ t \in \PLS(Q) \mid } \\
  & ~~ & \exists j,j', b \in \links(q_j, i_j),
       b' \in \rechts(q_{j'}, i_{j'}): \PL(b) = \PL(b') = t \}
\end{eqnarray*}
at some configuration $(i_1, \ldots, i_k)$
to be the set of pallets $t$ such that one bin for pallet $t$ has already
been removed and another bin for pallet $t$ is still contained in some
sequence. Let $\# \CUT_Q(i_1, \ldots, i_k)$ be the number of elements
in $\CUT_Q(i_1, \ldots, i_k)$.

The intention of a processing graph $G=(V,E,f)$ is the following. Suppose each
vertex $v \in V$ of digraph $G$ represents a configuration $(i_1, \ldots, i_k)$
during the processing of some set of sequences $Q$. Suppose further,
an arc $(u,v) \in E$ represents a transformation step during this
processing, such that a bin $b$ is removed from some sequence in
configuration $u$ resulting in configuration $v$. Suppose also that
each vertex $v$ is assigned the number of open pallets in configuration
$v$, i.e.\ number $\# \CUT_Q(i_1, \ldots, i_k)$.
If vertex $s$ represents the initial configuration
$(0,0, \ldots, 0)$, while vertex $t$ represents the final configuration
$(|q_1|, |q_2|, \ldots, |q_k|)$, then we are searching a path $P$ from
$s$ to $t$ such that the maximal number on path $P$ is minimal. Thus,
an optimal processing of $Q$ can be found by Algorithm {\sc Opt} given in
Figure \ref{fig:algorithm2}.

The {\em processing graph} has a vertex for each possible configuration. 
Each vertex $v$ of the processing graph is labeled by the vector
$h(v) = (v_1, \ldots, v_k)$, where $v_i$ denotes
the position of the bin that has been
removed last from sequence $q_i$.
Further  each vertex $v$ of the processing graph is labeled by
the value $f(v)=\# \CUT_Q(v_1, \ldots, v_k)=\# \CUT_Q(h(v))$.
There is an arc from vertex $u$
labeled by $(u_1, \ldots, u_k)$ to vertex $v$ labeled by $(v_1, \ldots, v_k)$ if
and only if $u_i = v_i - 1$ for exactly one element of the vector
and for all other elements of the vector
$u_j = v_j$. The arc is labeled with the pallet
symbol of that bin, that will be removed in the corresponding
transformation step. That means, the processing graph
represents all the possible branchings through the decision process.
For the sequences of Example \ref{EX5}
we get the processing graph of Figure \ref{F06}. The processing graph is
directed and acyclic, and we use this digraph to compute the values of
$\# \CUT_Q(i_1, \ldots, i_k)$ iteratively in the following way.

\begin{figure}[hbtp]
\centerline{\epsfxsize=75mm \epsfbox{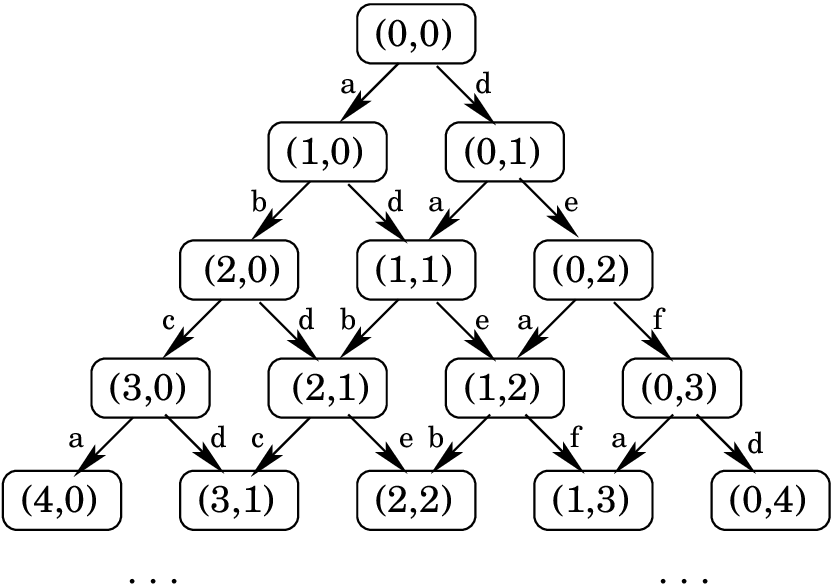}}
\caption{The processing graph of Example \ref{EX5}.
For every
vertex $v$ the vector $h(v)$ in the vertex represents the configuration of $v$ and  
every arc is labeled with the pallet
symbol of that bin, that is removed in the corresponding
transformation step. 
}
\label{F06}
\end{figure}

First, since none of the bins has been removed from some sequence,
we have $\# \CUT_Q(0, \ldots, 0) = 0$. Since the processing graph is
directed and acyclic, there exists a topological ordering $topol$ of the
vertices. The vertices are processed according to the order
$topol$. In each
transformation step we remove exactly one bin for some pallet $t$ from
some sequence $q_j$, thus
\begin{eqnarray*}
  \lefteqn{\# \CUT_Q(i_1, \ldots, i_{j-1}, i_j+1, i_{j+1}, \ldots, i_k)} \\
  & ~~ = &
   \# \CUT_Q(i_1, \ldots, i_{j-1}, i_j, i_{j+1}, \ldots, i_k) ~+~ c_{j}
\end{eqnarray*}
where
$c_{j} = 1$ if pallet $t$ has been opened in the
transformation step, and $c_{j} = -1$ if pallet $t$ has been closed
in the transformation step. Otherwise, $c_{j}$ is zero. If we
put this into a formula we get
\[
   c_{j} = \left\{
     \begin{array}{rll}
       1, & \mbox{if } \FI(q_j,t) = i_j+1 \mbox{ and }   \FI(q_\ell,t) > i_\ell ~\forall ~ \ell \neq j \\
      -1, & \mbox{if } \LA(q_j,t) = i_j+1 \mbox{ and }  \LA(q_\ell,t) \leq i_\ell ~ \forall ~ \ell \neq j \\
       0, & \mbox{otherwise.}
     \end{array}
   \right.
\]

Please remember our technical definition of $\FI(q,t)$ and $\LA(q,t)$
from page \pageref{label-page-fi-and-la}
for the case that $t\not\in \PLS(q)$.

That means, the calculation of value $\# \CUT_Q(i_1, \ldots, i_k)$ for the vertex
labeled $(i_1, \ldots, i_k)$ depends only on already calculated values.
Figure \ref{F07} shows such a processing for the sequences of Example
\ref{EX5}.

\begin{figure}[hbtp]
\centerline{\epsfxsize=83mm \epsfbox{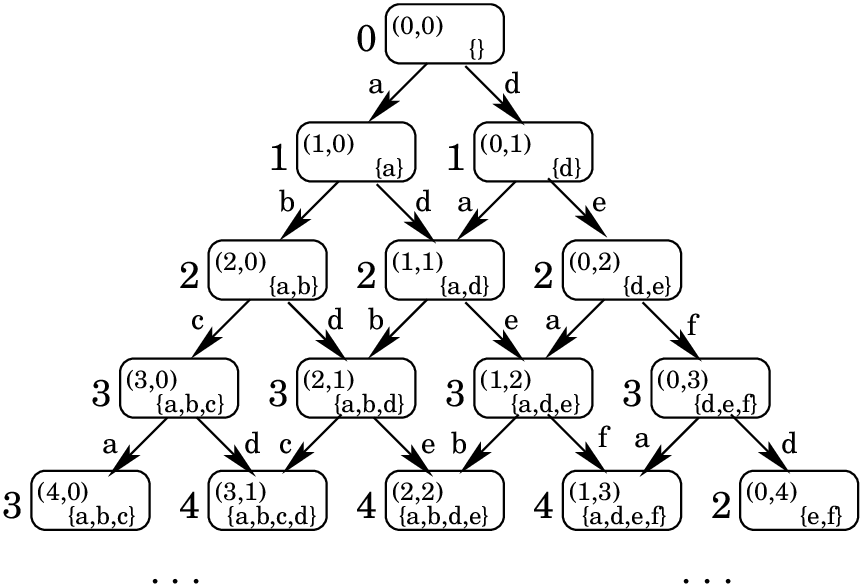}}
\caption{The processing graph of Example \ref{EX5}. For every
vertex $v$ the set in the vertex $v$ shows the open pallets, the
vector $h(v)$ in the vertex represents the configuration of $v$, and  
the number $f(v)$ on the left of the vertex represents the number of open pallets 
in the corresponding configuration $h(v)$.}
\label{F07}
\end{figure}

To efficiently perform the processing of $c_j$, we have to store for each pallet
the first and last bin in each sequence. Table \ref{TB3} shows these values
for the sequences of Example
\ref{EX5}. We compute all the values $\FI(q_j,t)$ and $\LA(q_j,t)$ for $q_i\in Q$ and
$t\in \PLS(Q)$ in some preprocessing phase. 
Such a preprocessing can be done in time $\bigo(k \cdot N+k\cdot m)$, which can
be bounded due to $m\leq k\cdot N$ by $\bigo(k\cdot (N+1)^k)$. 
Then the calculation of value $\# \CUT_Q(h(v))$ for the
vertex $v$ representing configuration $h(v)$ can be done in
time $\bigo(k)$.

%
%


\begin{table}[hbt]
\centering
\begin{tabular}{|r||c|c|c|c|c|c|}
\hline
pallet $t$ & ~$a$~ & ~$b$~ & ~$c$~ & ~$d$~ & ~$e$~ & ~$f$~ \\
\hline
\hline
$\FI(q_1, t)$ & 1 & 2 & 3 & 7 & 7 & 7 \\
$\FI(q_2, t)$ & 7 & 8 & 9 & 1 & 2 & 3 \\
\hline
\hline
$\LA(q_1, t)$ & 4 & 5 & 6 & 0 & 0 & 0 \\
$\LA(q_2, t)$ & 7 & 8 & 9 & 4 & 5 & 6 \\
\hline
\end{tabular}
\caption{The values $\FI(q_j,t)$ and $\LA(q_j,t)$ for $q_i\in Q$ and
$t\in \PLS(Q)$ of Example \ref{EX5}.}
\label{TB3}
\end{table}

The computation of all at most $(N+1)^k$ values $\# \CUT_Q(i_1, \ldots, i_k)$ 
can be performed in time $\bigo(k \cdot (N+1)^k)$.
After that we can use Algorithm {\sc Opt} to compute the minimal 
number of stack-up places necessary to process the given FIFO stack-up 
problem in time $\bigo(|V| + |E|)\subseteq \bigo(k \cdot (N+1)^k)$, since $|V|\leq (N+1)^k$ and $|E|\leq k\cdot |V|$. 
If the size of the processing graph is polynomially bounded in the size
of the input, the FIFO stack-up problem
can be solved in polynomial time. In general the processing graph 
can be too large to efficiently solve the problem.


\subsection{The Sequence Graph} \label{seq-gr}

Next we show a correlation 
between the used number of stack-up places for a processing of an instance $Q$ 
and the directed pathwidth of a digraph $G_Q$ defined by $Q$. The notion 
of {\em directed pathwidth} was introduced by Reed, Seymour, and Thomas in the 1990s
and leads a restriction of {\em directed treewidth} which is defined by
Johnson, Robertson, Seymour, and Thomas in \cite{JRST01}.

This correlation implies that \begin{inparaenum}[1.)]
\item  the 
FIFO stack-up problem is NP-complete and 
\item a pallet solution can be computed in 
polynomial time if there are only a fixed number of stack-up places.
\end{inparaenum}

A {\em directed path-decomposition} of a digraph $G=(V,E)$ is a 
sequence $$(X_1, \ldots, X_r)$$ of subsets of $V$, called {\em bags}, that satisfy the 
following three properties.

\begin{description}
\item[\bf (dpw-1)] $X_1 \cup \ldots \cup X_r ~=~ V$ 
\item[\bf (dpw-2)] for each arc $(u,v) \in E$ there are indices $i,j$ 
with $i \leq j$ 
such that $u \in X_i$ and $v \in X_j$ 
\item[\bf (dpw-3)] if $u \in X_i$ and $u \in X_j$ for some node $u$ and two 
indices $i,j$ with $i \leq j$, then $u \in X_\ell$ for all indices $\ell$ 
with $i \leq \ell \leq j$
\end{description}

The {\em width} of a directed path-decomposition ${\cal X}=(X_1, \ldots, X_r)$ 
is $$\max_{1 \leq i \leq r} |X_i|-1.$$ The {\em directed pathwidth} of $G$,
$\dpw(G)$ for short, is 
the smallest integer $w$ such that there is a directed path-decomposition for 
$G$ of width $w$. For symmetric digraphs, the directed pathwidth is equivalent 
to the undirected pathwidth of the corresponding undirected graph \cite{KKKTT12}, which
implies that determining
whether the pathwidth of some given digraph  is 
at most some given value $w$ is NP-complete. 
For each fixed integer $w$, it is decidable in polynomial time whether a given 
digraph has directed pathwidth at most $w$, see Tamaki \cite{Tam11}.

The {\em sequence graph} $G_Q = (V,E)$ for an instance $Q=(q_1,\ldots,q_k)$ 
of the FIFO stack-up problem is defined by vertex set $V= \PLS(Q)$ and 
the following set of arcs. 
There is an arc $(u,v) \in E$ if and only if there is some sequence 
where a bin destinated for $u$ is on the left of some bin destinated for $v$.

More formally, there is
an arc $(u,v) \in E$ if and only if 
there is a sequence $q_i=(b_{n_{i-1}+1},\ldots,b_{n_i})$ with two bins 
$b_{j_1},b_{j_2}$ such that
\begin{inparaenum}[(1)]
\item $j_1 < j_2$,
\item $\PL(b_{j_1})=u$,
\item $\PL(b_{j_2})=v$, and
\item $u \not= v$.
\end{inparaenum}

\begin{example} \label{EX6}
Figure \ref{F04} shows the sequence graph $G_Q$ for $Q = (q_1, q_2, q_3)$ 
with sequences $q_1 = [a,a,d,e,d]$, $q_2 = [b,b,d]$, and 
$q_3 = [c,c,d,e,d]$.
\end{example}

\begin{figure}[hbt]
\centerline{\epsfxsize=45mm \epsfbox{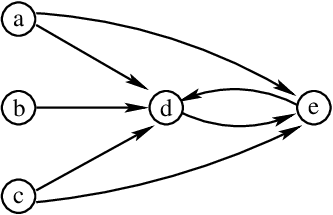}}
\caption{Sequence graph $G_Q$ of Example \ref{EX6}.}
\label{F04}
\end{figure}

If $G_Q = (V,E)$ has an arc $(u,v) \in E$ then $u \not= v$ and for 
every processing of $Q$, 
pallet $u$ is opened before pallet $v$ is closed. Digraph $G_Q = (V,E)$ can 
be computed in time 
$\bigo(n + k\cdot |E|)\subseteq \bigo(n + k\cdot  m^2)$ by the algorithm
{\sc Create Sequence Graph} shown in Figure \ref{fig:algorithm5}.

\begin{figure}[ht]
\hrule
{\strut\footnotesize \bf Algorithm {\sc Create Sequence Graph}} 
\hrule
\begin{tabbing}
xxxx \= xxxx \= xxxx \= xxxx \= xxxx \= xxxx \= xxxx \=\kill
for each sequence $q \in Q$ do \\
\> $b :=$ first bin of sequence $q$ \\
\> add $\PL(b)$ to vertex set $V$, if it is not already contained \\
\> $L := (\PL(b))$ \>\>\>\>\>\>   $\vartriangleright$ $L$ contains pallets of bins up to bin $b$ \\
\> for $i := 2$ to $|q|$ do \\
\>\> $b := i$-th bin of sequence $q$ \\
\>\> add $\PL(b)$ to vertex set $V$, if it is not already contained \\
\>\> if ($i== \LA(q,\PL(b))$)\\
\>\>\> for each pallet $t \in L$ do \\
\>\>\>\>  if $t \neq \PL(b)$ add arc $(t,\PL(b))$ to arc set $A$, if it is not already contained \\
\>\> if ($i== \FI(q,\PL(b))$)\\
\>\>\>  $append(\PL(b),L)$ 
\end{tabbing}
\hrule
\caption{Create the sequence graph $G=(V,A)$ for some given list of sequences $Q$.}
\label{fig:algorithm5}
\end{figure}

A value is added to a list only if it is not already contained. To check
this efficiently in time $\bigo(1)$ we have to use a boolean array. 
In our algorithm $V$ and $L$ are realized by boolean arrays.
Therefore we need some preprocessing phase where we run through each sequence and seek
for the pallets. This can be done in time $\bigo(n+k\cdot m)\subseteq \bigo(n+m^2)$.

\medskip
The following  two Theorems show the correlation 
between the used number of stack-up places for a processing of an instance $Q$ 
and the directed pathwidth of the sequence graph $G_Q$. 

First we want to emphasize
that not every directed path-decomposition of a sequence graph $G_Q$
immediately leads to a pallet solution. In our Example \ref{EX6} the
sequence $(\{e,a\},\{e,b\},\{e,c\},\{e,d\})$ 
is a directed path-decomposition of optimal width $1$ 
for the sequence graph $G_Q$.
But opening the pallets one after another leads to $(e,a,b,c,d)$, which is no pallet solution 
since pallet $e$ cannot be opened at first and must be put on hold. 
Within the proof of  Theorem \ref{th-pd-q} we show
how to transform a directed path-decomposition of $G_Q$ into a 
pallet solution for $Q$. Example  \ref{EX7} and Table \ref{TB2}
illustrate this process.

\begin{theorem}\label{th-q-pd}
A processing $(Q,Q_0),(Q,Q_1),\ldots,(Q,Q_n)$ of $Q$ with $Q_0=Q$, 
$Q_n=(\emptyset,\ldots,\emptyset)$, 
and $p$ stack-up places defines a directed path-decomposition 
$${\cal X}=(\OPEN(Q,Q_0),\ldots,\OPEN(Q,Q_n))$$ for $G_Q$ of width $p-1$.
\end{theorem}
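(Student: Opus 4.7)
The plan is to verify the three defining properties (dpw-1), (dpw-2), (dpw-3) of a directed path-decomposition for the sequence ${\cal X}=(\OPEN(Q,Q_0),\ldots,\OPEN(Q,Q_n))$ and then note that the width bound follows at once from the hypothesis that at most $p$ pallets are open in every configuration. Since $Q_n=(\emptyset,\ldots,\emptyset)$, the last bag is empty, and the first bag is empty too; the only work is to show that every pallet shows up in some intermediate bag and that the arcs of $G_Q$ respect the ordering.

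For (dpw-1) I would use the standing assumption stated in Section~\ref{sec-prel} that every pallet contains at least two bins. Thus when the first bin of a pallet $t$ is removed from its sequence at some step $i$, at least one further bin of $t$ is still present in $Q_i$, so $t\in\OPEN(Q,Q_i)$. This shows $\PLS(Q)\subseteq X_1\cup\cdots\cup X_n$, and of course the bags are subsets of $\PLS(Q)=V$.

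For (dpw-3) I would argue that the set of configurations in which a given pallet $u$ is open forms a contiguous interval $[\alpha_u,\beta_u-1]$, where $\alpha_u$ is the step that removes the first bin of $u$ and $\beta_u$ is the step that removes the last bin of $u$. Indeed, once $u$ becomes open it can only be closed by removing its last remaining bin, and once closed it cannot reopen since no bin of $u$ is left in any $Q_j$. So the indices containing $u$ form an interval, giving (dpw-3).

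The main obstacle, and therefore the step I would spend the most care on, is (dpw-2). Given an arc $(u,v)\in E$, there is, by definition of $G_Q$, a sequence $q_i$ containing a bin $b_{j_1}$ destined for $u$ at some position strictly left of a bin $b_{j_2}$ destined for $v$. Because transformation steps operate in FIFO order on each $q_i$, the bin $b_{j_1}$ is removed earlier than $b_{j_2}$, and in particular earlier than the last bin of $v$ in $q_i$, which is at a position at least $j_2$; that last bin of $v$ in $q_i$ is itself removed no later than the overall last bin of $v$, at step $\beta_v$. Hence $\alpha_u\le\text{(step removing }b_{j_1}\text{)}<\beta_v$, i.e.\ $u$ is opened strictly before $v$ is closed. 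Taking $i=\alpha_u$ and $j=\beta_v-1$ yields indices $i\le j$ with $u\in X_i$ and $v\in X_j$, verifying (dpw-2). Finally, because each $X_l=\OPEN(Q,Q_l)$ has at most $p$ elements by hypothesis, $\max_l|X_l|-1\le p-1$, so ${\cal X}$ is a directed path-decomposition of $G_Q$ of width $p-1$.
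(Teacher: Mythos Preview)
Your proof is correct and follows essentially the same approach as the paper's own proof: verify (dpw-1), (dpw-2), (dpw-3) directly from the meaning of $\OPEN(Q,Q_i)$ and then read off the width bound. The paper's version is considerably terser---for (dpw-2) it simply says ``$v$ cannot be closed before $u$ is opened'' and for (dpw-3) ``every pallet is opened at most once''---whereas you spell out the interval $[\alpha_u,\beta_u-1]$ explicitly and trace the FIFO step for the arc condition, but the underlying argument is identical.
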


\begin{proof}
We show that $${\cal X}=(\OPEN(Q,Q_0),\ldots,\OPEN(Q,Q_n))$$ satisfies 
all properties of a directed path-decomposition.
\begin{description}
\item[(dpw-1)]
$\OPEN(Q,Q_0) \cup \cdots \cup \OPEN(Q,Q_n) = \PLS(Q)$,
because every pallet is opened at least once during a processing.
\item[(dpw-2)]
If $(u,v) \in E$ then there are indices $i,j$ with $i \leq j$ such 
that $u \in \OPEN(Q,Q_i)$ and $v \in \OPEN(Q,Q_j)$, because $v$ can 
not be closed before $u$ is opened.  
\item[(dpw-3)]
If $u \in \OPEN(Q,Q_i)$ and $u \in \OPEN(Q,Q_j)$ for some pallet $u$ and 
two indices $i,j$ with $i \leq j$, then $u \in \OPEN(Q,Q_\ell)$ for all 
indices $\ell$ with $i \leq \ell \leq j$, because every pallet is opened at 
most once.
\end{description}
Since $Q$ is processed with $p$ stack-up places, we have $|\OPEN(Q,Q_i)| \leq p$ 
for $0 \leq i \leq n$, and therefore ${\cal X}$
has width at most $p-1$.
\end{proof}

\begin{theorem}\label{th-pd-q}
If there is a path-decomposition ${\cal X}=(X_1, \ldots, X_r)$ for $G_Q$ of 
width $p-1$ then there is a processing of $Q$ with $p$ stack-up places.
\end{theorem}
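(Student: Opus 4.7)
The plan is to use the path-decomposition $\mathcal{X}$ to build a pallet solution $T=(t_1,\ldots,t_m)$ and then feed it to Algorithm \textsc{Transform} so that the resulting bin solution keeps at most $p$ pallets open at every configuration. For each $u\in\PLS(Q)$, let $\alpha(u)=\min\{i:u\in X_i\}$ and $\omega(u)=\max\{i:u\in X_i\}$; property (dpw-3) guarantees $u\in X_i$ for every $i$ with $\alpha(u)\le i\le\omega(u)$. I would sort the pallets so that $\alpha(t_1)\le\alpha(t_2)\le\cdots\le\alpha(t_m)$, breaking ties arbitrarily, and hand this order to \textsc{Transform}.

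To bound the number of open pallets during the run, I would track the set $A$ of pallets already placed into $T'$ that still have at least one bin in some $q'_i$. Because a pallet becomes open only once a bin for it has been removed and remains open until its last bin vanishes, one has $\OPEN(Q,Q')\subseteq A$ at every configuration $(Q,Q')$ arising during \textsc{Transform}. Hence it suffices to establish the invariant that, at every moment, $A\subseteq X_i$ for some bag index $i$. This would go by induction on the iteration count: initially $A=\{t_1\}\subseteq X_{\alpha(t_1)}$, and any step~1 iteration either leaves $A$ unchanged or removes a single pallet from it, so the invariant is trivially preserved.

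The main work is a step~2 iteration that adds $t_{j+1}$ to $T'$; I would show $A\cup\{t_{j+1}\}\subseteq X_{\alpha(t_{j+1})}$. The new pallet $t_{j+1}$ belongs to $X_{\alpha(t_{j+1})}$ by definition. For every $t=t_{j'}\in A$ with $j'\le j$, the sorting forces $\alpha(t)\le\alpha(t_{j+1})$, so by (dpw-3) it only remains to check $\omega(t)\ge\alpha(t_{j+1})$. Because step~2 was triggered, step~1 had just failed, so no front bin of any $q'_\ell$ belongs to a pallet in $T'=\{t_1,\ldots,t_j\}$. Pick any remaining bin $b$ for $t$ in some $q'_\ell$; it cannot be the head of $q'_\ell$ (otherwise step~1 would have removed it, since $t\in T'$), so the head of $q'_\ell$ is a bin for some pallet $t_h\notin T'$, which forces $h\ge j+1$ and $t_h\neq t$. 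Since the head's bin lies to the left of $b$ in the original sequence $q_\ell$, there is an arc $(t_h,t)\in E$ in $G_Q$. Property (dpw-2) then yields indices $i_1\le i_2$ with $t_h\in X_{i_1}$ and $t\in X_{i_2}$, whence $\omega(t)\ge i_2\ge i_1\ge\alpha(t_h)\ge\alpha(t_{j+1})$, as required.

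The invariant now delivers $|\OPEN(Q,Q')|\le|A|\le|X_{\alpha(t_{j+1})}|\le p$ at every configuration, so \textsc{Transform} produces a valid processing of $Q$ using at most $p$ stack-up places. Termination is automatic: step~2 fires at most $m$ times, step~1 at most $n$ times, and once $T'=\PLS(Q)$ step~1 can no longer fail. I expect the delicate point to be the step~2 analysis above, where one has to extract the correct arc of $G_Q$ witnessing $\omega(t)\ge\alpha(t_{j+1})$; this works precisely because \textsc{Transform}'s priority rule guarantees that step~2 is invoked only when every open pallet's next remaining bin sits behind a bin for some not-yet-opened pallet.
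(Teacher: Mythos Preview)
Your proof is correct and rests on the same key idea as the paper: guide the processing by the first-bag index $\alpha(\cdot)$ taken from the path-decomposition. The paper, however, phrases the strategy as an \emph{online} rule---at each decision configuration open a pallet of $\FRONT(Q')$ with minimal $\alpha$---and then shows that for any two simultaneously open pallets the intervals $[\alpha(t),\beta(t)]$ overlap pairwise, finishing with a Helly-type step to obtain a common bag. You instead fix the entire pallet order up front (global sort by $\alpha$), feed it to \textsc{Transform}, and prove the sharper invariant that right after step~2 adds $t_{j+1}$ the whole active set $A\supseteq\OPEN(Q,Q')$ is contained in the single bag $X_{\alpha(t_{j+1})}$. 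The two routes yield the same processing, but your argument names the witnessing bag explicitly and thereby sidesteps the pairwise-overlap-to-common-point step; the paper's version, in return, makes it clearer that the rule can be executed on the fly without precomputing a total order on all pallets.
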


\begin{proof}
For a pallet $t$ let $\alpha({\cal X},t)$ be the smallest $i$ such that 
$t \in X_i$ and $\beta({\cal X},t)$ be the largest $i$ such that $t \in X_i$,
see Example \ref{EX7}.
Then $t \in X_i$ if and only if $\alpha({\cal X},t)\le i \le \beta({\cal X},t)$. 
If $(t_1,t_2)$ is an arc of $G_Q$, then $\alpha({\cal X},t_1) \leq \beta({\cal X},t_2)$. 
This follows by (dpw-2) of the definition of a directed path-decomposition.

Instance $Q$ can be processed as follows. If it is necessary to open in a 
configuration $(Q,Q')$ a new pallet, then we open a pallet $t$ of $\FRONT(Q')$ 
for which $\alpha({\cal X},t)$ is minimal.

We next show that for every configuration $(Q,Q')$ of the processing above 
there is a bag $X_i$ in the directed path-decomposition ${\cal X}=(X_1, \ldots, X_r)$, such 
that $\OPEN(Q,Q') \subseteq X_i$. This implies that the processing uses 
at most $p$ stack-up places.

First, let $(Q,Q')$ be a decision configuration, let $t \in \FRONT(Q')$ such that 
$\alpha({\cal X},t)$ is minimal, and let $t' \in \OPEN(Q,Q')$ be an 
already open pallet.
We show that the intervals $[\alpha({\cal X},t),\beta({\cal X},t)]$
and $[\alpha({\cal X},t'),\beta({\cal X},t')]$ overlap, because
$\alpha({\cal X},t) \leq \beta({\cal X},t')$ and 
$\alpha({\cal X},t') \leq \beta({\cal X},t)$.

\begin{enumerate}[(1)]
\item To show $\alpha({\cal X},t) \leq \beta({\cal X},t')$ 
we observe the following:

Since $t' \not\in \FRONT(Q')$, there has to be a pallet 
$t'' \in \FRONT(Q')$ such that $(t'',t')$ is an arc of $G_Q$. 
This implies that $\alpha({\cal X},t'') \leq \beta({\cal X},t')$. 
Since $t$ is a pallet of $\FRONT(Q')$ for which $\alpha({\cal X},t)$ 
is minimal, we have $\alpha({\cal X},t) \leq \alpha({\cal X},t'')$ and 
thus $\alpha({\cal X},t) \leq \beta({\cal X},t')$. 

\item To show $\alpha({\cal X},t') \leq \beta({\cal X},t)$ 
we observe the following:

Let $(Q,Q'')$ be the configuration in which $t'$ has been opened.
\begin{enumerate}
\item $t \in \FRONT(Q'')$. Since $t'$ is a pallet of $\FRONT(Q'')$ 
for which $\alpha({\cal X},t')$ is minimal, we have 
$\alpha({\cal X},t') \leq \alpha({\cal X},t)$ and thus 
$\alpha({\cal X},t') \leq \beta({\cal X},t)$.
\item $t \not\in \FRONT(Q'')$. Then there is a palled 
$t'' \in \FRONT(Q'')$ such that $(t'',t)$ is an arc of $G_Q$. 
This implies that $\alpha({\cal X},t'') \leq \beta({\cal X},t)$. 
Since $t'$ is a pallet of $\FRONT(Q'')$ for which $\alpha({\cal X},t')$ 
is minimal, we have $\alpha({\cal X},t') \leq \alpha({\cal X},t'') \leq \beta({\cal X},t)$.
\end{enumerate}
\end{enumerate}

%
%

%
%
Finally, let $(Q,\hat{Q})$ be an arbitrary configuration during 
the processing of $Q$.
By the discussion above, we can conclude that for every pair $t_i,t_j\in \OPEN(Q,\hat{Q})$
the intervals $[\alpha({\cal X},t_i),\beta({\cal X},t_i)]$
and $[\alpha({\cal X},t_j),\beta({\cal X},t_j)]$
overlap, because $t_i$ is opened before $t_j$ or vice versa.
Since the cut of all intervals $[\alpha({\cal X},t_i),\beta({\cal X},t_i)]$,
$t_i\in \OPEN(Q,\hat{Q})$ mutually overlap,
there is a bag $X_j$ 
in the directed path-decomposition ${\cal X}$ such that $\OPEN(Q,\hat{Q}) \subseteq X_j$.
%
%
%
%
\end{proof}

\begin{example} \label{EX7}
We consider digraph $G_Q$ for $Q = (q_1, q_2, q_3)$ 
with sequences $q_1 = [a,a,d,e,d]$, $q_2 = [b,b,d]$, and 
$q_3 = [c,c,d,e,d]$ from Example \ref{EX6}. 
Sequence $(\{a,e\}$, $\{b,e\}$, $\{c,e\}$, $\{d,e\})$ 
is a directed path-decomposition of width $1$,
which implies the following values for $\alpha$
and $\beta$ used in the proof of Theorem \ref{th-pd-q}.
$$
\begin{array}{|l|c|c|c|c|c|}
\hline
\text{pallet } t  & ~a~ & ~b~ & ~c~ & ~d~ & ~e~ \\
\hline
\hline
\alpha({\cal X},t) & 1 & 2 & 3 & 4 & 1 \\
\hline
\beta({\cal X},t)  & 1 & 2&  3 & 4 & 4  \\
\hline
\end{array}
$$
Table \ref{TB2} 
shows a processing of $Q$ with $2$ stack-up places and 
pallet solution $S=(a,b,c,d,e)$.  
The underlined bin is always the bin that will be
removed in the next transformation step. We denote $Q_i = (q^i_1,q^i_2,q^i_3)$,
thus each row represents a configuration $(Q, Q_i)$.
\end{example}

\begin{table}[hbtp]
\[\begin{array}{|r|lllll|l|l|}
\hline
 i & q^i_1 & ~ & q^i_2 &  ~ & q^i_3  & \FRONT(Q')&\OPEN(Q,Q_i)    \\
\hline
 0 & [\underline{a},a,d,e,d] & ~ & [b,b,d] & ~ & [c,c,d,e,d] &  \{a,b,c\}& \emptyset  \\
\hline
 1 & [\underline{a},d,e,d] & ~ & [b,b,d] & ~ & [c,c,d,e,d]  & \{a,b,c\} &\{a\} \\
\hline
 2  & [d,e,d] & ~ & [\underline{b},b,d] & ~ & [c,c,d,e,d]  &  \{b,c,d\}&\emptyset\\
\hline
 3   & [d,e,d] & ~ & [\underline{b},d] & ~ & [c,c,d,e,d]  & \{b,c,d\} &\{b\} \\
\hline
 4  & [d,e,d] & ~ & [d] & ~ & [\underline{c},c,d,e,d]  & \{c,d\} &\emptyset\\
\hline
 5  & [d,e,d] & ~ & [d] & ~ & [\underline{c},d,e,d]  &  \{c,d\}&\{c\}\\
\hline
 6  & [\underline{d},e,d] & ~ & [d] & ~ & [d,e,d]   &  \{d\}&\emptyset\\
\hline
 7   & [e,d] & ~ & [\underline{d}] & ~ & [d,e,d]  & \{d,e\} &\{d\} \\
\hline
 8 & [e,d] & ~ & [] & ~ & [\underline{d},e,d]   & \{d,e\}&\{d\} \\
\hline
 9& [\underline{e},d] & ~ & [] & ~ & [e,d]  &\{e\} &\{d\} \\
\hline
 10 & [\underline{d}] & ~ & [] & ~ & [e,d]  & \{d,e\}& \{d,e\}\\
\hline
 11  & [] & ~ & [] & ~ & [\underline{e},d] & \{e\}& \{d,e\}\\
\hline
 12 & [] & ~ & [] & ~ & [\underline{d}]   & \{d\}& \{d\} \\
\hline
 13 & [] & ~ & [] & ~ & []  &  \emptyset & \emptyset \\
\hline
\end{array}\]
\caption{A processing of $Q$ with respect to a given directed path-de\-com\-po\-sition 
for $G_Q$ of Example \ref{EX7}.}
\label{TB2} 
\end{table}

\section{Hardness Result}

Next we will show the hardness of the FIFO stack-up problem. In contrast to 
Section \ref{SC-dm} we will transform an instance of a graph problem into 
an instance of the FIFO stack-up problem.

Let $G=(V,E)$ be a digraph. We will assume that $G=(V,E)$
does not contain any vertex with only  outgoing arcs
and not contain any vertex with only incoming arcs.
This is only for technical 
reasons and the removal of such vertices 
will not change the directed pathwidth of $G$, because 
a vertex $u$ with only outgoing arcs can be 
placed in a singleton $X_i=\{u\}$ at the beginning
of the directed path-decomposition and a vertex $u$ with only incoming arcs can be 
placed in a singleton $X_i=\{u\}$ at the end
of the directed path-decomposition, without to change its width.

Let $G=(V,E)$ be some  digraph and $E=\{e_1,\ldots,e_\ell\}$ its arc set.
The {\em queue system} $Q_G = (q_1,\ldots,q_\ell)$ for $G$ is defined as follows.
\begin{enumerate}[(1)]
\item There are $2\ell$ bins $b_1,\ldots,b_{2\ell}$.

\item Queue $q_i=(b_{2i-1},b_{2i})$ for $1\leq i \leq \ell$.

\item The pallet symbol of bin $b_{2i-1}$ is the first vertex of arc $e_i$
and the pallet symbol of $b_{2i}$ is the second vertex of arc $e_i$ for
$1\leq i \leq \ell$. Thus  $\PLS(Q_G) = V$.
\end{enumerate}

The definition of queue system $Q_G$ and sequence graph 
$G_Q$, defined in Section \ref{seq-gr}, now imply the following proposition.

\begin{proposition}\label{prop}
For every digraph $G$,
$$G \quad = \quad G_{Q_G}.$$
\end{proposition}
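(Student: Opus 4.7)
The plan is to prove the equality of digraphs by checking equality of vertex sets and of arc sets, unfolding the two constructions side by side. Throughout I would assume, as seems implicit in the hardness reduction, that the input digraph $G$ has no self-loops (so that the clause $u \neq v$ in the definition of $G_Q$ does not cause a loss).

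First I would verify that $V(G) = V(G_{Q_G})$. By definition, $V(G_{Q_G}) = \PLS(Q_G)$, and the first clause in the construction of $Q_G$ explicitly sets $\PLS(Q_G) = V$. So the vertex sets agree.

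Next I would show the two inclusions on arcs. For $E(G) \subseteq E(G_{Q_G})$: take any arc $(u,v) \in E$. By clause (3), there is an index $i$ such that $q_i = (b_{2i-1}, b_{2i})$ with $\PL(b_{2i-1}) = u$ and $\PL(b_{2i}) = v$. Since $u \neq v$ (no self-loops), the bin for $u$ lies to the left of the bin for $v$ in queue $q_i$, which is exactly the condition for $(u,v)$ to be an arc of $G_{Q_G}$. For the reverse inclusion $E(G_{Q_G}) \subseteq E(G)$: suppose $(u,v) \in E(G_{Q_G})$. By definition of the sequence graph there is some queue of $Q_G$ containing a bin destined for $u$ strictly to the left of a bin destined for $v$, with $u \neq v$. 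But every queue of $Q_G$ has length exactly two, so this queue must be of the form $q_i = (b_{2i-1}, b_{2i})$ with $\PL(b_{2i-1}) = u$ and $\PL(b_{2i}) = v$. By construction of $Q_G$ such a queue exists only if $(u,v) \in E$, which is what we needed.

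The only subtlety worth flagging is the self-loop issue: the sequence graph is loop-free by definition, whereas a queue $(b_{2i-1}, b_{2i})$ built from a hypothetical self-loop $(u,u) \in E$ would produce no arc in $G_{Q_G}$. This is the one spot where the proof could fail, but under the standing assumption that $G$ is loopless (consistent with the use of $G$ as a pathwidth instance in the subsequent NP-hardness reduction) the identity $G = G_{Q_G}$ follows from the two inclusions above.
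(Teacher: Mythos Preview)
Your proposal is correct and matches the paper's approach: the paper does not give an explicit proof at all, stating only that the proposition follows directly from the definitions of $Q_G$ and $G_Q$. Your argument is precisely the natural unfolding of those definitions (vertex sets by clause~(1), arc sets by clause~(3) together with the length-two queues), and your remark on self-loops is apt---the digraphs arising in the reduction are symmetric digraphs coming from undirected graphs, hence loopless.
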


\begin{example}\label{EX3-new}
Consider the  digraph $G$ of Figure \ref{F-ex-d}.
The corresponding queue system is $Q_G=(q_1,q_2,q_3,q_4,q_5,q_6,q_7)$,
where
$$
\begin{array}{lcllcllcllcl}
q_1 &= &[a,b], & q_2 &=& [b,c], & q_3 &=& [c,d], &q_4 &=& [d,e],  \\
q_5 &= &[e,a], & q_6 &=& [e,f], & q_7 &=& [f,a].
\end{array}$$
The sequence graph of $Q_G$ is $G$.
\end{example}

\begin{figure}[hbtp]
\centerline{\epsfxsize=50mm \epsfbox{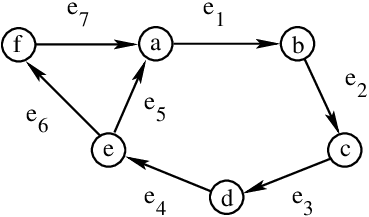}}
\caption{Digraph $G$ of Example \ref{EX3-new}.}
\label{F-ex-d}
\end{figure}

\begin{lemma}
\label{L01}
There is a directed path-decomposition for $G$ of width $p-1$ if and only if 
there is a processing of $Q_G$ with at most $p$ stack-up places.
\end{lemma}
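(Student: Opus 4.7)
The plan is to deduce the lemma directly from Proposition \ref{prop} together with Theorems \ref{th-q-pd} and \ref{th-pd-q}, which already establish the correspondence between processings of an arbitrary instance $Q$ and directed path-decompositions of its sequence graph $G_Q$. The queue system $Q_G$ was constructed precisely so that $G_{Q_G}=G$, so the two directions of the ``iff'' drop out of the two theorems, applied to the instance $Q_G$.

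More concretely, first I would verify that the earlier assumption --- every pallet has at least two bins in the queue system --- is satisfied for $Q_G$. Each pallet $v\in V$ receives one bin for every arc of $G$ incident with $v$, so the number of bins destined for $v$ equals $\deg^{-}(v)+\deg^{+}(v)$. The standing assumption at the start of Section~4 that $G$ has no vertex with only outgoing and no vertex with only incoming arcs guarantees that both degrees are positive, hence at least two bins per pallet. This makes the notion of open pallet meaningful for $Q_G$ and permits the application of the theorems from Section~\ref{seq-gr}.

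For the ``only if'' direction I would take a directed path-decomposition $\mathcal{X}=(X_1,\dots,X_r)$ of $G$ of width $p-1$. By Proposition~\ref{prop}, $\mathcal{X}$ is also a directed path-decomposition of $G_{Q_G}$ of width $p-1$, and then Theorem~\ref{th-pd-q} produces a processing of $Q_G$ with at most $p$ stack-up places. For the ``if'' direction I would start with a processing of $Q_G$ using at most $p$ stack-up places; Theorem~\ref{th-q-pd} turns this processing into a directed path-decomposition of $G_{Q_G}$ of width $p-1$, which by Proposition~\ref{prop} is a directed path-decomposition of $G$ of the same width.

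The argument is essentially a bookkeeping exercise once the two theorems are in hand. The only genuine obstacle worth flagging is the mild hypothesis on $G$: without excluding vertices whose incident arcs are all of one orientation, the corresponding pallets in $Q_G$ would have only a single bin and fall outside the framework developed in Section~\ref{sec-prel}. The remark at the beginning of Section~5 observes that removing such vertices does not affect directed pathwidth, which closes this gap and makes the reduction to the two theorems fully justified.
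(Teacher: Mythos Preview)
Your proposal is correct and follows essentially the same route as the paper: invoke Proposition~\ref{prop} to identify $G$ with $G_{Q_G}$, then apply Theorem~\ref{th-pd-q} for one direction and Theorem~\ref{th-q-pd} for the other. Your additional paragraph verifying the two-bins-per-pallet hypothesis for $Q_G$ is a welcome piece of care that the paper's proof leaves implicit; note only that the relevant remark about removing sources and sinks without changing directed pathwidth appears at the start of Section~4 (the Hardness Result section), not Section~5.
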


\begin{proof}
By Proposition \ref{prop} we know that $G = G_{Q_G}$.
If there is a directed path-decomposition for $G= G_{Q_G}$ of width $p-1$ 
then by Theorem \ref{th-pd-q} there is a processing of $Q_G$ with at most 
$p$ stack-up places. If there is a processing of $Q_G$ with at most 
$p$ stack-up places then by Theorem \ref{th-q-pd} there is directed 
path-decomposition for $G$ of width $p-1$. 
\end{proof}

\begin{theorem}
Given a list $Q = (q_1, \ldots, q_k)$ of $k$ sequences of bins
and some positive integer $p$. The problem to decide whether there is
a processing of $Q$ with at most $p$ stack-up places is NP-complete.
\end{theorem}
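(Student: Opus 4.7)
\textbf{Proposal.} The plan is to establish NP-completeness by separately verifying membership in NP and then giving a polynomial-time reduction from a known NP-complete pathwidth problem, using Lemma \ref{L01} as the bridge.

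\textbf{Membership in NP.} A natural polynomial-size certificate is a pallet solution $T=(t_1,\ldots,t_m)$ of $\PLS(Q)$. Given such $T$, I would run algorithm \textsc{Transform} (Fig. \ref{fig:algorithm1}) in time $\bigo(n^2)$ to obtain a bin solution, simulate the processing step by step, and at each configuration $(Q,Q')$ check that $|\OPEN(Q,Q')|\le p$. This verification is clearly polynomial in the input size, so the problem lies in NP.

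\textbf{NP-hardness via reduction.} The hardness will come from the pathwidth problem for digraphs. As noted in Section \ref{seq-gr}, directed pathwidth coincides with undirected pathwidth on symmetric digraphs \cite{KKKTT12}, so deciding whether a given digraph has directed pathwidth at most $w$ is NP-complete. I would reduce this problem to FIFO stack-up: given an instance $(G,w)$, first apply a trivial preprocessing that removes every vertex with only outgoing arcs and every vertex with only incoming arcs; as observed just before the construction of $Q_G$, such vertices can be absorbed into singleton bags at the ends of any directed path-decomposition and therefore do not affect the pathwidth. On the resulting digraph $G'$ I construct the queue system $Q_{G'}$ in polynomial time (it has $|E(G')|$ queues, each of length two, so $2|E(G')|$ bins). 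Set $p:=w+1$.

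\textbf{Correctness.} By Proposition \ref{prop}, $G_{Q_{G'}}=G'$, and Lemma \ref{L01} then yields the equivalence: $G'$ (and hence $G$) has directed pathwidth at most $w=p-1$ if and only if $Q_{G'}$ admits a processing with at most $p$ stack-up places. Combined with membership in NP, this establishes NP-completeness.

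\textbf{Anticipated obstacle.} Nothing deep remains, since Lemma \ref{L01} already does the essential combinatorial work. The only care needed is the preprocessing step that removes vertices with only incoming or only outgoing arcs (required because the construction of $Q_G$ implicitly assumed this) and the explicit argument that this removal preserves directed pathwidth. The other point worth stating, though routine, is that $Q_{G'}$ meets the standing assumption that each pallet receives at least two bins, which holds automatically after preprocessing since every remaining vertex has both an incoming and an outgoing arc.
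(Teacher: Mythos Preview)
Your proposal is correct and follows essentially the same route as the paper: membership in NP plus a reduction from (directed) pathwidth via the queue system $Q_G$ and Lemma~\ref{L01}. The paper's own proof is terser (it simply says the problem is ``obviously in NP'' and invokes Lemma~\ref{L01} together with \cite{KF79}), while you spell out the NP certificate, the preprocessing that strips sources and sinks, and the verification that every pallet in $Q_{G'}$ has at least two bins; these additions are sound and harmless.
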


\begin{proof}
The given problem is obviously in NP. Determining
whether the pathwidth of some given (undirected) graph  is 
at most some given value $w$ is NP-complete  \cite{KF79} 
and for symmetric digraphs a special case of the problem on 
directed graphs (cf. Introduction of \cite{KKKTT12}). Thus
the NP-hardness follows from Lemma \ref{L01}, 
because $Q_G$ can be constructed from $G$ in linear time. 
\end{proof}

By the results of  \cite{Bod93} 
it is shown in \cite{MS88} that determining
whether the pathwidth of some given (undirected) graph $G$ is at most some given value $w$
remains NP-complete even for planar graphs with maximum vertex
degree 3. Thus the problem to decide, whether the directed pathwidth 
of some given symmetric digraph $G$ is at most some given value $w$
remains NP-complete even for planar digraphs with maximum vertex
in-degree 3 and maximum vertex out-degree 3.
Thus, by our transformation of graph $G$ into $Q_G$ we get sequences
that contain together at most 6 bins per pallet. Hence, the FIFO stack-up
problem is NP-complete even if the number of bins per pallet is bounded.
Therefore we have proved the following statement.

\begin{corollary}
Given a list $Q = (q_1, \ldots, q_k)$ of $k$ sequences of bins and some
positive integer $p$. The problem to decide whether there is a processing
of $Q$ with at most $p$ stack-up places is NP-complete, even if the
sequences of $Q$ contain together at most 6 bins per pallet.
\end{corollary}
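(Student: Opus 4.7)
The plan is to invoke the preceding Theorem (NP-completeness of FIFO stack-up) together with the stronger hardness results for pathwidth on bounded-degree planar graphs, and then verify that the reduction through $Q_G$ preserves the bound on bins per pallet. Membership in NP is already established in the Theorem, so only the hardness part needs attention.

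First I would start from the fact, due to Monien and Sudborough \cite{MS88} (building on Bodlaender \cite{Bod93}), that deciding whether the pathwidth of a given undirected graph is at most $w$ remains NP-complete when the input is restricted to planar graphs of maximum degree $3$. Given such an instance $H$, I would form the symmetric digraph $G$ by replacing every undirected edge of $H$ with two oppositely directed arcs. Since pathwidth of an undirected graph equals directed pathwidth of its symmetric orientation \cite{KKKTT12}, $G$ has directed pathwidth $\le w$ iff $H$ has pathwidth $\le w$, and $G$ has maximum in-degree and out-degree both equal to $3$.

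Next I would apply the construction from the previous section: build the queue system $Q_G$ from $G$. By Lemma \ref{L01}, $G$ admits a directed path-decomposition of width $p-1$ if and only if $Q_G$ admits a processing with at most $p$ stack-up places. It remains to control the number of bins per pallet. For a fixed vertex $v\in V(G)$, the definition of $Q_G$ puts exactly one bin labelled $v$ into the queue $q_i=[v,w]$ for each outgoing arc $(v,w)$, as the first bin, and exactly one bin labelled $v$ into the queue $q_i=[u,v]$ for each incoming arc $(u,v)$, as the second bin. Consequently, the total number of bins destined for pallet $v$ across all queues equals $d^+(v)+d^-(v) \le 3+3 = 6$. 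The preliminary assumption that $G$ has no vertex of pure in- or out-degree is harmless, since such vertices may be removed or placed as singleton bags without affecting directed pathwidth; moreover, pallets with only one bin can be handled separately with one extra stack-up place, so the reduction can be arranged to always produce at least two bins per pallet.

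The only step with any subtlety is the last bin-counting observation, and it is in fact routine once the definition of $Q_G$ is unpacked. Since the reduction is clearly computable in linear time, the NP-hardness of the FIFO stack-up problem under the restriction that at most $6$ bins are destined for any single pallet follows, which together with NP-membership yields the corollary.
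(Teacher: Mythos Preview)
Your proposal is correct and follows essentially the same route as the paper: start from the NP-completeness of pathwidth on planar graphs of maximum degree~3 (Monien--Sudborough), pass to the symmetric digraph with in- and out-degree at most~3, apply the $Q_G$ construction via Lemma~\ref{L01}, and observe that each pallet $v$ receives exactly $d^+(v)+d^-(v)\le 6$ bins. Your additional remarks on handling pure-source/pure-sink vertices and single-bin pallets are more explicit than the paper's treatment but do not change the argument.
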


\section{Bounded FIFO stack-up systems}

In this section we show that the FIFO stack-up problem
can be solved in polynomial time, if the number $k$ of 
sequences or the number $p$ of stack-up places is assumed 
to be fixed.

\subsection{Fixed number of sequences}

In this section we assume that the number $k$ of sequences is fixed.

In Section \ref{SC2} we have shown that the FIFO stack-up problem
can be solved by dynamic programming using the processing graph
in time $\bigo(k\cdot (N+1)^{k})$. Thus, we have already shown the following
result.

\begin{theorem}
Given a list $Q = (q_1, \ldots, q_k)$ of sequences of bins  for some fixed $k$ and
a number of stack-up places $p$. The question whether the sequences can
be processed with at most $p$ stack-up places  can be solved in polynomial 
time.
\end{theorem}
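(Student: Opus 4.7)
The plan is to invoke the processing graph machinery from Section \ref{SC2}, observing that for fixed $k$ its size becomes polynomially bounded, and then apply Algorithm {\sc Opt} to decide whether the minimum number of stack-up places needed is at most $p$.

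More concretely, first I would construct, in a preprocessing step, the two hash tables storing $\FI(q_j,t)$ and $\LA(q_j,t)$ for every sequence $q_j$ and every pallet $t \in \PLS(Q)$ (see Table \ref{TB3}); by the discussion on page~\pageref{label-page-fi-and-la} this takes $\bigo(k\cdot N)$ time. Next I would build the processing graph $G=(V,E,f)$ of $Q$: its vertex set consists of all configurations $(i_1,\ldots,i_k)$ with $0 \leq i_j \leq |q_j|$, so $|V| \leq (N+1)^k$, and there is an arc from $(u_1,\ldots,u_k)$ to $(v_1,\ldots,v_k)$ whenever the two tuples agree in all but one coordinate, where they differ by $1$; hence $|E| \leq k\cdot |V|$. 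Traversing the vertices in topological order (e.g.\ by lexicographic order of $(i_1,\ldots,i_k)$), the labels $f(v) = \#\CUT_Q(h(v))$ can be filled in incrementally using the recurrence for $c_j$ given in Section \ref{SC2}, spending only $\bigo(k)$ work per vertex thanks to the $\FI$/$\LA$ tables. The total cost of this phase is $\bigo(k\cdot (N+1)^k)$.

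Then I would run Algorithm {\sc Opt} from Figure \ref{fig:algorithm2} on $G$, with source $s$ the all-zero configuration $(0,\ldots,0)$ and target $t$ the configuration $(|q_1|,\ldots,|q_k|)$. By the correctness discussion preceding Figure \ref{F06} and the semantics of $\CUT_Q$, every directed $s$--$t$ path in $G$ corresponds to a processing of $Q$, and the maximum $f$-value along the path equals the maximum number of simultaneously open pallets used by that processing. Hence $val(t)$ is exactly the minimum number of stack-up places over all processings of $Q$, and the instance is a YES-instance if and only if $val(t) \leq p$. Algorithm {\sc Opt} runs in $\bigo(|V|+|E|) = \bigo(k\cdot (N+1)^k)$ time.

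Finally I would observe that since $k$ is fixed and $N \leq n$, the quantity $k\cdot (N+1)^k$ is polynomial in the input size $n$, so the overall procedure runs in polynomial time. The only real ``obstacle'' is bookkeeping: one must be careful that the incremental update of $\CUT_Q$ along each arc is truly $\bigo(k)$, which is why the $\FI$/$\LA$ hash tables are precomputed once and then only queried; everything else is a direct appeal to the already-established correctness of the processing graph and of Algorithm {\sc Opt}.
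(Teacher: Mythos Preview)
Your proposal is correct and follows essentially the same approach as the paper: the paper's own proof is simply a one-line back-reference to Section~\ref{SC2}, pointing out that the $\bigo(k\cdot (N+1)^{k})$ running time of the processing-graph dynamic program is polynomial once $k$ is fixed. Your write-up spells out the same argument in more detail (hash tables, incremental $\CUT$ updates, Algorithm {\sc Opt}), but the route is identical.
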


Next we improve this result.

\begin{theorem} \label{TH2}
Given a list $Q = (q_1, \ldots, q_k)$ of sequences of bins for some fixed $k$ and
a number of stack-up places $p$. The question whether the sequences can
be processed with at most $p$ stack-up places is non-deterministically
decidable using logarithmic work-space.
\end{theorem}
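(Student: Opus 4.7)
The plan is to exhibit a nondeterministic log-space Turing machine that guesses a processing one transformation step at a time, keeping only the current configuration on its work tape and verifying, after each guess, that the number of open pallets stays at most $p$. A configuration is fully described by the tuple $(i_1,\ldots,i_k)$ recording how many bins have already been removed from each sequence, where $0\le i_j\le |q_j|\le n$. Storing it costs $O(k\log n)$ bits, which is $O(\log n)$ because $k$ is fixed.

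The algorithm starts with $(0,\ldots,0)$ on the work tape and repeatedly performs: (i) compute the number of open pallets in the current configuration and reject if it exceeds $p$; (ii) if the configuration equals $(|q_1|,\ldots,|q_k|)$, accept; otherwise nondeterministically choose an index $j\in\{1,\ldots,k\}$ with $i_j<|q_j|$ and overwrite $i_j$ by $i_j+1$. By the definition of a processing, an accepting computation exists if and only if $Q$ admits a processing with at most $p$ stack-up places; the total number of steps is $n$, well within NL's time budget.

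The main technical obstacle is counting $\#\CUT_Q(i_1,\ldots,i_k)$ using only logarithmic auxiliary space, without ever materialising a list of already-seen pallet symbols. To do this I would iterate over all bins $b$ of the input in lexicographic (sequence, position) order. For the current bin $b$, I would first run an inner log-space scan over all strictly earlier bins to check that none of them carries the pallet symbol $\PL(b)$; this ensures that each pallet is counted at its first input occurrence only. If $b$ is such a first occurrence, two further independent log-space scans decide whether some bin with pallet $\PL(b)$ lies in position $\le i_j$ of some $q_j$ (the pallet has been opened) and whether some bin with pallet $\PL(b)$ lies in position $> i_{j'}$ of some $q_{j'}$ (the pallet has not been closed). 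If both scans succeed, a counter is incremented; after the whole sweep the counter is compared with $p$.

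All state kept across iterations consists of the configuration vector, a constant number of loop indices over sequences and positions, and the open-pallets counter, each of size $O(\log n)$. The nested scans have constant nesting depth, so space does not accumulate. The only subtlety is the double-counting issue, and the first-occurrence trick resolves it without additional memory; beyond that, correctness is immediate since nondeterministic guesses range exactly over the set of valid transformation steps and the verification exactly enforces the FIFO stack-up constraint.
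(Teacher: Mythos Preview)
Your proposal is correct and shares the paper's overall architecture: store only the position vector $(i_1,\ldots,i_k)$ in $O(k\log n)=O(\log n)$ bits, nondeterministically guess one removal at a time, and check after each step that at most $p$ pallets are open. The one technical difference is how the open-pallet count is obtained. You recompute $\#\CUT_Q(i_1,\ldots,i_k)$ from scratch at every configuration via nested log-space scans over the input, using the first-occurrence filter to avoid double counting. The paper instead keeps a single running counter $\mathit{open}$ and updates it by $\pm 1$ after each guessed removal: it tests, using the values $\FI(q_j,t)$ and $\LA(q_j,t)$ read off the input, whether the bin just removed is the globally first (resp.\ last) bin of its pallet, and increments (resp.\ decrements) accordingly. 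Both realisations fit comfortably in logarithmic space; the incremental update is a bit leaner and avoids your nested-scan machinery, while your recomputation is more self-contained and does not require arguing that the increment/decrement tests are exhaustive.
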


\begin{proof}
We need $k+1$ variables, namely $pos_1, \ldots, pos_k$ and $open$. Each
variable $pos_i$ is used to store the position of the bin which has been
removed last from sequence $q_i$. Variable $open$ is used to store the number
of open pallets. These variables take $(k+1) \cdot \lceil \log(n) \rceil$
bits.
The simulation starts with $pos_1 := 0, \ldots, pos_k := 0$ and $open := 0$.
\begin{enumerate}[(i)]
\item Choose non-deterministically 
any index $i$ and increment variable $pos_i$. Let $b$ be the
  bin on position $pos_i$ in sequence $q_i$, and let $t := \PL(b)$ be the
  pallet symbol of bin $b$. \\
  {\it Comment:} The next bin $b$ from some sequence $q_i$ will be removed.
\item If $\FI(q_j, t) > pos_j$ or $t\not\in \PLS(q_j)$ for each $j \neq i$, $1 \leq j \leq k$,
  and $\FI(q_i, t) = pos_i$, then increment variable $open$. \\
  {\it Comment:} If the removed bin $b$ was the first bin of pallet $t$ that
  ever has been removed from any sequence, then pallet $t$ has just been
  opened.
\item If $\LA(q_j, t) \leq pos_j$ or $t\not\in \PLS(q_j)$ for each $j$, $1 \leq j \leq k$, then
  decrement variable $open$. \\
  {\it Comment:} If bin $b$ was the last one of pallet $t$, then pallet $t$
  has just been closed.
\end{enumerate}
If $open$ is set to a value greater than $p$ in Step (ii) of the
algorithm then the execution is stopped in a non-accepting state. To
execute Steps (ii) and (iii) we need a fixed number of additional variables.
Thus, all steps can be executed non-deterministically using logarithmic
work-space. 
\end{proof}

Theorem \ref{TH2} implies that the FIFO stack-up problem with a fixed
number of given sequences can be solved in polynomial time since NL is
a subset of P. The class NL is the set of problems decidable
non-deterministically on logarithmic work-space. For example, reachability
of nodes in a given graph is NL-complete, see \cite{Pap94}.
Even more, it can be solved in parallel in polylogarithmic time with
polynomial amount of total work, since NL is a subset of NC$_2$. The class
NC$_2$ is the set of problems decidable in time $\bigo(\log^2(n))$ on a
parallel computer with a polynomial number of processors, see \cite{Pap94}.

\subsection{Fixed number of stack-up places}

In \cite{Tam11} it is shown that the problem of determining the bounded 
directed pathwidth of a digraph is solvable in polynomial time. 
By  Theorem \ref{th-q-pd} and  Theorem \ref{th-pd-q} 
the FIFO stack-up problem with fixed number $p$ of stack-up places is also 
solvable in polynomial time.

\begin{theorem}
Given a list $Q = (q_1, \ldots, q_k)$ of $k$ sequences of bins and
a fixed number of stack-up places $p$. The question whether the sequences can
be processed with at most $p$ stack-up places  can be solved in polynomial 
time.
\end{theorem}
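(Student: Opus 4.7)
The plan is to reduce the bounded FIFO stack-up decision problem directly to a bounded directed pathwidth decision problem on the sequence graph, and then invoke Tamaki's algorithm from \cite{Tam11} as a black box. Since $p$ is fixed, $p-1$ is a fixed width bound, which is exactly the regime in which Tamaki's algorithm runs in polynomial time.

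First I would construct, from the given list $Q = (q_1,\ldots,q_k)$, the sequence graph $G_Q = (V,E)$ as defined in Section~\ref{seq-gr}. As already observed in that section, $G_Q$ can be built in time $\bigo(n + m^2)$, which is polynomial in the input size. Next, I would run Tamaki's algorithm on $G_Q$ to decide whether $\dpw(G_Q) \leq p-1$; for the fixed value $p-1$ this takes polynomial time in $|V(G_Q)| = m$, hence polynomial time in the size of the original FIFO stack-up instance.

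To argue correctness I would use the equivalence established by Theorems~\ref{th-q-pd} and~\ref{th-pd-q}. If Tamaki's algorithm answers ``yes'', it produces (or one can extract) a directed path-decomposition of $G_Q$ of width at most $p-1$, and Theorem~\ref{th-pd-q} turns this decomposition into a processing of $Q$ with at most $p$ stack-up places; the construction in that proof is itself polynomial, using the priority rule on $\alpha({\cal X},t)$ together with algorithm \textsc{Transform}. Conversely, if any processing of $Q$ using at most $p$ stack-up places existed, Theorem~\ref{th-q-pd} would yield a directed path-decomposition of $G_Q$ of width $p-1$, contradicting a ``no'' answer from Tamaki's algorithm. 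Hence the FIFO stack-up question is answered correctly in polynomial time.

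The only potentially delicate step is making sure that the quoted bounded directed pathwidth algorithm from \cite{Tam11} actually fits our needs; but since $p$ is a fixed constant in the statement, the width parameter passed to Tamaki is fixed, so his polynomial-time guarantee applies without modification. Everything else (building $G_Q$, and the constructive direction in the proof of Theorem~\ref{th-pd-q}) is routine polynomial-time bookkeeping, so no further obstacle arises.
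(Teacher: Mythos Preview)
Your proposal is correct and follows essentially the same approach as the paper: build the sequence graph $G_Q$, invoke Tamaki's bounded directed pathwidth algorithm from \cite{Tam11} with the fixed parameter $p-1$, and appeal to Theorems~\ref{th-q-pd} and~\ref{th-pd-q} for the equivalence. You simply spell out in more detail (the construction time of $G_Q$ and the constructive extraction of a processing) what the paper leaves implicit.
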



\section{Conclusions and Outlook}

In this paper, we have shown that the FIFO stack-up problem is NP-complete
in general, even if in all sequences together there are at most 6 
bins destined for the same pallet. The problem can be solved in polynomial 
time, if the number $k$ of sequences or if the number $p$ of stack-up places 
is assumed to be fixed.

In our future work, we want to find an answers to the following questions.
Our hardness result motivates to analyze the time complexity of the 
FIFO stack-up problem if in all sequences together there are at most 
$c$, $1<c<6$, bins destined for the same pallet.


Further we are interested in online algorithms for 
instances where we only know the first $c$ bins
of every sequence instead of the complete sequences.
Especially, we are interested in the answer to the following
question: Is there a $d$-competitive online algorithm?
Such an algorithm must compute a processing
of some $Q$ with at most $p \cdot d$ stack-up places, if $Q$ can be processed
with at most $p$ stack-up places.

In real life the bins arrive at the stack-up system on the main conveyor
of a pick-to-belt orderpicking system. That means, the distribution of bins
to the sequences has to be computed. Up to now we consider the distribution
as given. We intend to consider how to compute
an optimal distribution of the bins from the main conveyor onto the sequences
such that a minimum number of stack-up places is
necessary to stack-up all bins from the sequences.


\section*{Acknowledgements}
We thank Prof. Dr.~Christian Ewering who shaped our interest into
controlling stack-up systems. Furthermore, we are very grateful to
Bertelsmann Distribution GmbH in G\"utersloh, Germany, for providing
the possibility to get an insight into the real problematic nature
and for providing real data instances.


\bibliographystyle{plain}
\bibliography{/home/gurski/bib.bib}

\end{document}